\newtheorem{proposition}{Proposition}
\begin{document}
\title{Joint Beam Scheduling and Beamforming Design for Cooperative Positioning in Multi-beam LEO Satellite Networks}

 \author{{Hongtao~Xv,
             Yaohua~Sun,
             Yafei~Zhao,
             Mugen~Peng,~\IEEEmembership{Fellow,~IEEE},
             and Shijie Zhang}

   \thanks{This work is supported in part by the National Natural Science Foundation of China (Grant No. 62371071 and No. 62001053), and in part by the Young Elite Scientists Sponsorship Program by CAST (Grant No. 2021QNRC001).}
       \thanks{Copyright (c) 2015 IEEE. Personal use of this material is permitted. However, permission to use this material for any other purposes must be obtained from the IEEE by sending a request to pubs-permissions@ieee.org. }
     \thanks{Hongtao~Xv (htxu@bupt.edu.cn), Yaohua~Sun (sunyaohua@bupt.edu.cn), Yafei~Zhao(zhaoyafei@bupt.edu.cn), and Mugen~Peng (pmg@bupt.edu.cn) are with the State Key Laboratory of Networking and Switching Technology, Beijing University of Posts and Telecommunications, Beijing 100876, China. 
     Shijie Zhang (zsj@yinhe.ht) is with Beijing University of Posts and Telecommunications and Yinhe Hangtian (Beijing) Internet Technology Co., Ltd..
     (\bf{Corresponding author: Yaohua Sun})}
 }
\maketitle
\begin{abstract}
Cooperative positioning with multiple low earth orbit (LEO) satellites is promising in providing location-based services and enhancing satellite-terrestrial communication.
However, positioning accuracy is greatly affected by inter-beam interference and satellite-terrestrial topology geometry. To select the best combination of satellites from visible ones and suppress inter-beam interference, this paper explores the utilization of flexible beam scheduling and beamforming of multi-beam LEO satellites that can adjust beam directions toward the same earth-fixed cell to send positioning signals simultaneously.
By leveraging Cram\'{e}r-Rao lower bound (CRLB) to characterize user Time Difference of Arrival (TDOA) positioning accuracy, the concerned problem is formulated, aiming at optimizing user positioning accuracy under beam scheduling and beam transmission power constraints.
To deal with the mixed-integer-nonconvex problem, we decompose it into an inner beamforming design problem and an outer beam scheduling problem. For the former, we first prove the monotonic relationship between user positioning accuracy and its perceived signal-to-interference-plus-noise ratio (SINR) to reformulate the problem, and then semidefinite relaxation (SDR) is adopted for beamforming design.
For the outer problem, a heuristic low-complexity beam scheduling scheme is proposed, whose core idea is to schedule users with lower channel correlation to mitigate inter-beam interference while seeking a proper satellite-terrestrial topology geometry.
Simulation results verify the superior positioning performance of our proposed positioning-oriented beamforming and beam scheduling scheme, and it is shown that average user positioning accuracy is improved
by $17.1\%$ and $55.9\%$ when the beam transmission power is 20 dBw, compared to conventional beamforming and beam scheduling schemes, respectively.
\end{abstract}

\begin{IEEEkeywords}
Multi-beam LEO satellite, cooperative TDOA positioning, positioning-oriented beamforming and beam scheduling.
\end{IEEEkeywords}

\section{Introduction}
%To realize centralized control and coordinate the transmission of those LEO SATs, it is assumed that a GW on the ground layer can establish feeder links with multiple visible satellites during a time period.
%------------------------------------
With the evolution of small satellite platform manufacture and space launch technology, dense LEO satellite constellations will come into being, such as Starlink and OneWeb \cite{1,res7}.
In such constellations, multiple LEO satellites can simultaneously cover a ground user, which facilitates providing more enhanced services, including satellite edge computing \cite{0908-1} and high precision user positioning \cite{0908-2}. For the latter, multiple satellites can send positioning signals to the same user or cell to improve Geometric Dilution of Precision (GDOP), convergence speed, and pseudo-range measurement accuracy, where cooperative satellite beam scheduling and beamforming design are the fundamental issues.

Owing to the development of software-defined payloads, satellite onboard communication resources can be scheduled and configured on demand. An LEO satellite can flexibly adjust the serving relationship between beams and earth-fixed cells \cite{aa2}, following a specific time-space transmission plan. Such flexibility in beam scheduling enables a good match between limited network resources and heterogeneous service demands, thereby greatly enhancing system utility \cite{aa4,res8}. The design of beam scheduling usually involves binary variables, which makes the problem fall into the category of integer programming that is difficult to solve. The authors of \cite{aa1} made use of the Semi-Definite Programming (SDP) technique and presented a heuristic algorithm to provide a solution rapidly. Aiming to achieve global optimal beam scheduling, the authors of \cite{aa7} employed a simulated annealing method at the expense of high computation complexity. Considering the unknown dynamics of the satellite communication environment, a model-free deep reinforcement learning-based approach was proposed to learn the optimal beam scheduling plan \cite{aa3}. However, these previous works ignore the potential co-frequency interference among beams. To further reduce the inter-beam interference, literature \cite{aa5} provided a novel joint beam scheduling and SVD-based beamforming approach to suppress inter-beam interference. The authors in \cite{res1,res2,res3} adopted SDR to reformulate the co-channel interference suppression problem and conducted successive convex approximation (SCA) to optimize the beamformer. In a similar vein, the authors in \cite{0908-1} adopted Lyapunov optimization theory to solve the problems of binary association and beamforming to maximize the data rate. Moreover, literature \cite{aa1} jointly considered beam scheduling design and Zero Forcing (ZF) beamforming under per feed power constraint at the satellite.

%The above works on satellite beam scheduling and beamforming mainly focus on satisfying user communication demands. However, multi-beam LEO satellite systems are expected to provide integrated communication and positioning information services in the future \cite{2}. Apart from providing location-based services, position information is also essential for enhancing communication services. In the physical layer, positioning information can be used for improving carrier frequency offset estimation in uplink orthogonal frequency division multiple access systems \cite{pos_for_fre}. In the networking layer, a user terminal (UT) can leverage its position information to trigger a handover procedure between satellites actively \cite{pos_for_handover}. Moreover, compared with GNSS, which relies on satellites mainly operating in medium earth orbits, the free space loss with LEO satellites is much lower. The received power on the ground can be improved by more than 30 dB under the same transmission power \cite{3}, which makes it easier for us to capture and track positioning signals. 

The above works on satellite beam scheduling and beamforming mainly focus on satisfying user communication demands. However, multi-beam LEO satellite systems need to take advantage of positioning and provide positioning information services in the future \cite{2}. Compared with GNSS, which relies on satellites mainly operating in medium earth orbits, the free space loss with LEO satellites is much lower. The received power on the ground can be improved by more than 30 dB under the same transmission power \cite{3}, which makes it easier for us to capture and track positioning signals. Apart from providing location-based services, position information is also essential for enhancing communication services. In the physical layer, positioning information can be used for improving carrier frequency offset estimation in uplink orthogonal frequency division multiple access systems \cite{pos_for_fre}. In the networking layer, a user terminal (UT) can leverage its position information to trigger a handover procedure between satellites actively \cite{pos_for_handover}.

A well-known methodology for positioning is the time difference of arrival (TDOA) technique \cite{a5}. In this paradigm, a UT calculates the difference between the times at which the signals of a pair of satellites arrive at the UT. For each pair of satellites, TDOA provides a hyperbolic map plot of possible UT positions \cite{6}. Each additional satellite pair provides another intersecting hyperbola, further refining the UT position.
To assess TDOA performance \cite{7,a4,a9,a10,a11}, the metric for benchmarking any unbiased estimators, CRLB, is generally used.
There has been some research on optimizing TDOA positioning performance by beamforming and beam scheduling in terrestrial and satellite networks.
The authors of \cite{a6} optimized beamforming vectors at base stations (BSs) to minimize the power expenditure under data rate and positioning accuracy constraints, where multiple BSs with multiple antennas communicate with multiple single-antenna UTs. Literature \cite{a7} proposed a method to decrease the overlap of estimated positions as well as increase the accuracy of TDOA positioning by using beamforming at BSs. In LEO satellite systems, the authors of \cite{10} investigated the beam scheduling problem and designed a novel beam hopping framework to optimize positioning performance.

Overall, the joint design of beamforming and beam scheduling for positioning in multi-beam LEO satellite systems still needs to be explored, which is technically challenging for several reasons. First, with the number of visible satellites increasing, it becomes intractable to quickly select serving satellites for UTs/cells to maximize positioning accuracy. Second, considering full frequency reuse (FFR) among satellite beams \cite{8,9}, inter-beam interference becomes critical, and the radio ranging measurement accuracy can significantly degrade as SINR gets lower. Third, mitigating inter-beam interference also imposes higher requirements on beam scheduling design, and meanwhile, proper satellite-terrestrial topology geometry should also be formed by beam schedule. Our main contributions are summarized as follows.

\begin{itemize}
    \item  A novel multi-beam LEO satellite network scenario for cooperative UT positioning is considered, where each satellite can generate multiple beams using a planar antenna array, and each UT can receive positioning signals sent by beams from numerous satellites.
    Using CRLB to characterize the TDOA positioning accuracy of a single UT, we formulate the problem of joint satellite beam scheduling and beamforming design, aiming at optimizing UT positioning accuracy under per-beam transmission power constraint and the constraint on the number of serving beams for each UT.

    \item To deal with the non-convex mixed-integer problem, it is decomposed into an inner beamforming design problem and an outer beam scheduling problem. For the former, we propose a UT SINR threshold adjustment based beamforming algorithm with the SDR technique, which is inspired by the monotonic relationship between the positioning accuracy of a single UT and its perceived SINR. For the latter, a fast and efficient greedy heuristic beam scheduling algorithm that changes the value of a 0-1 variable from 0 to 1 in each round of iteration is developed, considering the trade-off between channel correlation and UT-satellite topology geometry.

    \item Extensive numerical evaluations are carried out, where our proposal is compared with conventional beamforming and GDOP-based beam scheduling schemes under different beam transmission powers and the numbers of visible satellites. It is shown that average user positioning accuracy can be improved
    by $17.1\%$ and $55.9\%$, respectively.
\end{itemize}

The remainder of the paper is organized as follows. Section \ref{sec2} presents the system model and the concerned positioning accuracy optimization problem. Section \ref{section3} details the design of the beamforming algorithm. In Section \ref{section4}, the beam scheduling problem is solved. Simulation results are given in Section \ref{sec_SIMULATION_RESULTS}, and the conclusion is drawn in Section \ref{section5}. Essential symbols are summarized in Table \ref{table4}.

We use the following notation throughout this paper:
$\mathbf{X}$ is a matrix; $\mathbf{x}$ is a vector; $x$ is a scalar; $ \mathbf{X}(:,j)$ is the  $j$-th column of $\mathbf{X}$; $\mathbf{X}(i,j)$ is the element in the $i$-th row and $j$-th column of $\mathbf{X}$; The transpose and conjugate
transpose of $\mathbf{X}$ are represented by $\mathbf{X}^T$ and $\mathbf{X}^H$, respectively; $\mathbf{I}_N$ is the $N \times
N$ identity matrix; $\text{rank}(\mathbf{X})$ denotes
the \text{rank} of $\mathbf{X}$; $\mathbf{X}^{-1}$ is the inverse of $\mathbf{X}$; $\lambda_i(\mathbf{X})$ denotes the $i$-th eigenvalue of $\mathbf{X}$; $\text{tr}(\mathbf{X})$ indicates the trace of $\mathbf{X}$; $\vert \vert \mathbf{x}\vert \vert_2$ is 2-norm of $\mathbf{x}$; $\mathbb{C}^{m\times n}$ denotes an $m$ by $n$ dimensional complex space; Kronecker product is denoted by $\otimes$; $\mathbb{E}$ is the expectation operator;
$\frac{\partial f}{\partial x}$ is used to denote the partial derivative of function $f$ with respect to $x$.

\begin{table}[!t]
	%\center \caption{Summary of Notations}
	\centering \caption{Symbol summary}
	\label{table4}
	\begin{tabular}{l l}  \hline
		\textbf{Notation} & \textbf{Definition} \\
		\hline
		$\mathcal{I}$ & The set of visible satellites \\
		$\mathcal{C}$ & The set of UTs \\
		$\mathcal{I}_c$ & The set of satellites serving UT $c$ \\
		$\mathcal{C}_i$ & The set of UTs served by satellite $i$ \\
		$N$ & The number of antenna elements at each satellite\\
		$\delta_{i,c}$ & A 0-1 indicator representing whether satellite $i$ serves UT $c$\\
		$\sigma_{i,c}$ & The CRLB of MSE of TOA measurement of UT $c$ from\\& satellite $i$\\
		$\mathbf{w}_{i,c}$ & The beamforming vector of satellite $i$ for UT $c$\\
		$\text{SINR}_{i,c}$ & The SINR of UT $c$ from satellite $i$\\
		$\mathbf{h}_{i,c}$ & The channel coefficients between satellite $i$ and UT $c$\\
		$\mathbf{s}$ & The three-dimensional coordinate of satellite or UT\\
		$d_{i,c}$ & The distance between satellite $i$ and UT $c$\\
		$\gamma_{i,c}$ & The SINR threshold of UT $c$ from satellite $i$\\
		$\text{CRLB}_{c}$ & The positioning accuracy of UT $c$ in terms of CRLB\\
		$K$ & The maximum number of beams of one satellite \\
		$I_{\text{TDOA}}$ & The number of satellites serving each UT except the reference \\&satellite \\
		$P$ & The maximum beam transmission power \\
		$B$ & The bandwidth of each satellite\\
		\hline
	\end{tabular}
\end{table}

\section{System Model and Problem Formulation}
\label{sec2}
\subsection{System Model}

The LEO satellite communication system scenario for cooperative UT positioning is shown in Fig.\ref{fig1}. 
To realize centralized control and coordinate satellites, if the reference satellite has powerful computing capability and connects to other satellites through inter-satellite links (ISLs), the positioning-oriented beam scheduling and beamforming algorithms are executed at this satellite, whose results are fed back to other satellites via ISLs. Otherwise, the gateway (GW), covering the cooperative LEO satellites simultaneously, is responsible for executing these algorithms and forwards the results to satellites via feeder links.
Each satellite is equipped with a uniform planar array (UPA) composed of $N=N_xN_y$ antennas, where $N_x$ and $N_y$ are the numbers of antennas on the x and y axis, respectively, and it is assumed that each satellite can generate at most $K$ beams.
By sending positioning signals on time-frequency resource that is orthogonal to the resource used for normal communication services,
multiple satellites can cooperatively provide UT positioning capability.
There are multiple single-antenna UTs distributed in $C$ earth-fixed cells whose set is denoted by $\mathcal{C}=\{1,2,...,C\}$, and denote the set of LEO satellites visible to the cells by $\mathcal{I}=\{1,2,...,I\}$. Further, it is assumed that there is one UT in each cell for simplicity, and hence we use UT $c$ to represent the UT located in cell $c$ in the following.
Define $\delta_{i,c}=1$ if satellite $i$ schedules a beam to serve UT $c$, and $\delta_{i,c}=0$ otherwise. The set of satellites serving UT $c$ and the set of UTs served by satellite $i$ are denoted as $\mathcal{I}_c=\{i|\delta_{i,c}=1\}$ and $\mathcal{C}_i=\{c|\delta_{i,c}=1\}$, respectively.

The beamforming vector of satellite $i$ for UT $c$ is denoted as $\mathbf{w}_{i,c}\in \mathbb{C}^{N\times 1}$ and the collection of all beamforming vectors is denoted as $\mathbf{W}_i\in \mathbb{C}^{N\times \vert\mathcal{C}_i\vert}$. Then, the discrete-time transmitted navigation signal of satellite $i$ is therefore given by $\mathbf{x}_{i}=\mathbf{W}_{i}\mathbf{n}_{i}$, where $\mathbf{n}_i$ is the $\vert\mathcal{C}_i\vert \times 1$ symbol vector 
consisting of the navigation signal $n_{i,c}$ for UT $c$ with $\mathbb{E}[\mathbf{n}_i\mathbf{n}_i^H] = \mathbf{I}_{\vert\mathcal{C}_i\vert}$. Suppose that proper frequency reuse is utilized among satellites so there is no inter-satellite interference. In contrast, intra-satellite interference exists due to full frequency reuse among beams of the same satellite. Therefore, the received signal of UT $c$ from satellite $i$ is given by

\begin{equation}
   y_{i,c}=\mathbf{h}_{i,c}^T\mathbf{w}_{i,c}n_{i,c}+\sum_{c'\in \mathcal{C}_i,c'\neq c}\mathbf{h}_{i,c}^T\mathbf{w}_{i,c'}n_{i,c'}+\hat{y}_{i,c},
\end{equation}
where $\mathbf{h}_{i,c}\in \mathbb{C}^{N\times 1}$ is the channel vector between satellite $i$ and UT $c$, and $\hat{y}_{i,c} \sim \mathcal{N}(0,\epsilon_c^2)$ is white Gaussian noise. Then, the SINR of UT $c$ when receiving the positioning signal from satellite $i$ is given by
\begin{equation}\label{2}
	\text{SINR}_{i,c}=\frac{\lvert\mathbf{h}_{i,c}^T\mathbf{w}_{i,c}\rvert^2}{\sum_{c'\in\mathcal{C}_i, c'\neq c}\lvert\mathbf{h}_{i,c}^T\mathbf{w}_{i,c'}\rvert^2+\epsilon_c^2},\quad i\in \mathcal{I}_c.
\end{equation}

Considering the clock asynchronous problem, the signal from a reference satellite is used as the reference measurement for all UTs. With the knowledge of satellite positions acquired from the ephemeris, UTs use TDOA measurements for positioning. Specifically, UT $c$ uses the signal from satellite $i\in\mathcal{I}_c$ to estimate time of arrival (TOA) and calculates the time difference between TOAs of satellite $i$ and the reference satellite. Then, %TDOA yields hyperbolic curves with the UT at the foci for each pair of satellites and converts the location process to an algebraic solution of finding the positive root of a polynomial.
UT $c$ estimates its own position with the observation equations of TDOA. According to \cite{7,12}, for UT $c$ and satellite $i$, the CRLB of TOA measurement under an AWGN channel is calculated by
\begin{equation}\label{3}
	\sigma_{i,c}^2= \frac{3}{4\pi^2B^2 \text{SINR}_{i,c}},
\end{equation}
where $B$ is the bandwidth of satellite $i$'s beam \cite{13}.
\begin{figure}[htbp]
    \color{red}
	\centering
	\includegraphics[width=0.5\textwidth]{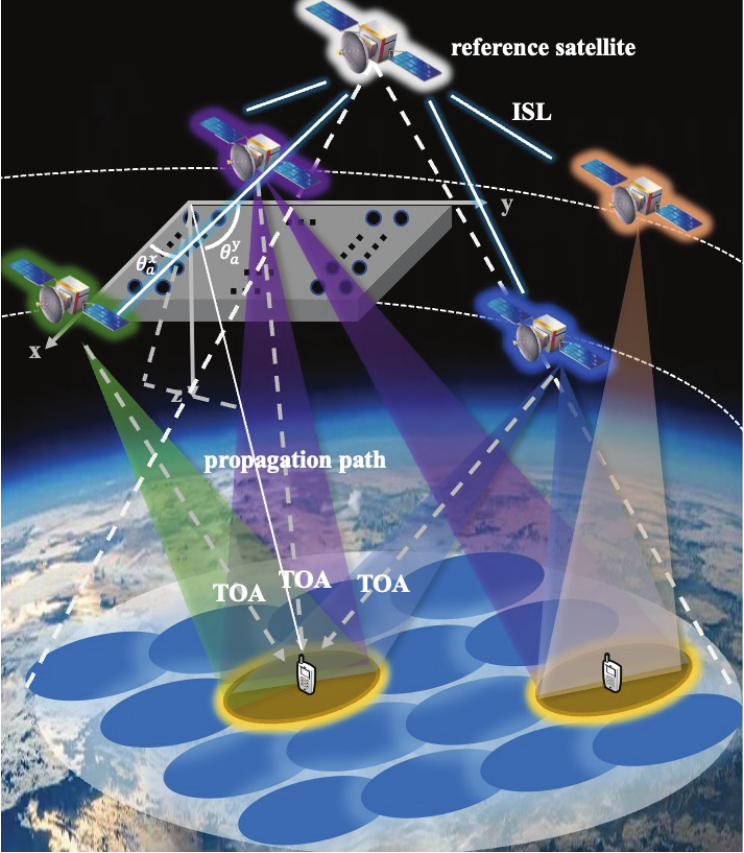}
	\caption{The multi-beam LEO satellite system scenario for cooperative UT positioning.}
	\label{fig1}
\end{figure}

Denote the positions of UT $c$, satellites $i$ and the reference satellite as $\mathbf{s}_c=\left[x_c,y_c,z_c\right]^T$, $\mathbf{s}_i=\left[x_i,y_i,z_i\right]^T$ and $\mathbf{s}_0=\left[x_0,y_0,z_0\right]^T$, respectively.
Then, the distance $d_{i,c}$ between satellite $i$ and UT $c$ can be expressed as
\begin{equation}\label{1}
	d_{i,c}=\Vert \mathbf{s}_c-\mathbf{s}_i \Vert_2.
\end{equation}
Thus, UT $c$ can get TDOA measurements for satellite $i$ and the reference satellite as
\begin{equation}\label{TDOA}
    \begin{split}
        \text{TDOA}_{i,c}&=\frac{1}{v}(d_{i,c}-d_{0,c})+(\text{TOA}_{i,c}-\text{TOA}_{0,c})\\
        &=G_{i,c}(\mathbf{s}_c)+(\text{TOA}_{i,c}-\text{TOA}_{0,c}),
    \end{split}
\end{equation}
where $v$ is the speed of light, $G_{i,c}(\mathbf{s}_c)=\frac{1}{v}(d_{i,c}(\mathbf{s}_c)-d_{0,c}(\mathbf{s}_c))$, $\text{TOA}_{i,c}$ and $\text{TOA}_{0,c}$ are the TOA measurement and $\text{TOA}_{i,c}\sim \mathcal{N}(0,\sigma_{i,c}^2)$. As a result of the fact of a common reference satellite, TDOA measurements are correlated. Denote the indexes of satellites serving UT $c$ as $i_1,i_2,\cdots,i_{\vert \mathcal{I}_c\vert}$, the variance of $\text{TOA}_{i_1,c}-\text{TOA}_{0,c}$ is $\sigma_{i_1,c}^2+\sigma_{0,c}^2$ and $\text{TDOA}_c=[\text{TDOA}_{i_1,c},\text{TDOA}_{i_2,c},\cdots,\text{TDOA}_{i_{\vert \mathcal{I}_c\vert},c}]^T$ has a joint conditional Gaussian distribution with covariance matrix $\mathbf{R}_c$ given as
\begin{equation*}
	\mathbf{R}_c=\left[
	\begin{array}{cccc}
		\sigma_{0,c}^2+\sigma_{{i_1},c}^2 & \sigma_{0,c}^2 & \cdots &\sigma_{0,c}^2 \\
		\sigma_{0,c}^2 & \sigma_{0,c}^2+\sigma_{{i_2},c}^2  & \cdots &\sigma_{0,c}^2 \\
		\vdots & \vdots & \ddots&\vdots \\
		\sigma_{0,c}^2 & \sigma_{0,c}^2 & \cdots&\sigma_{0,c}^2 +\sigma_{i_{{\left|\mathcal{I}_c\right|}},c}^2 \\
	\end{array}
	\right].
\end{equation*}
The measurement equation for the measurement vector $\text{TDOA}_c$ is
given by
\begin{equation}
    \text{TDOA}_c=G_c(\mathbf{s}_c)+u_c, u_c\sim \mathcal{N}(0,\mathbf{R}_c),
\end{equation}
where $G_c(\mathbf{s}_c)=[G_{i_1,c}(\mathbf{s}_c),G_{i_2,c}(\mathbf{s}_c),\cdots,G_{i_{\vert \mathcal{I}_c\vert,c}(\mathbf{s}_c),c}]^T$.

The CRLB for TDOA positioning equals the inverse of the Fisher information $\mathbf{J}$ \cite{res5} as
\begin{align}\label{4}
    \text{CRLB}_c=tr(\mathbf{J}^{-1})&=tr(((\frac{\partial G_c(\mathbf{s}_c)}{\partial \mathbf{s}_c})^T\mathbf{R}_c^{-1}(\frac{\partial G_c(\mathbf{s}_c)}{\partial \mathbf{s}_c}))^{-1})\notag \\ 
    &=tr((\mathbf{A}_c^T\mathbf{R}_c^{-1}\mathbf{A}_c)^{-1}).
\end{align}
%\begin{equation*}
%	\mathbf{A}_c=\frac{1}{v}\left[\frac{\mathbf{s}_c-\mathbf{s}_{i_1,c}}{d_{i_1,c}}-\mathbf{a}_0,\cdots,\frac{\mathbf{s}_c-\mathbf{s}_{i_{\left|\mathcal{I}_c\right|},c}}{d_{i_{\left|\mathcal{I}_c\right|},c}}-\mathbf{a}_0\right]^T
%\end{equation*}
$\mathbf{A}_c$ represents the Jacobian matrix of $G_c$ with regard to $\mathbf{s}_c$, which is given by
\begin{equation*}
    	\mathbf{A}_c=\frac{\partial G_c(\mathbf{s}_c)}{\partial \mathbf{s}_c}=\frac{1}{v} \left[\mathbf{a}_{i_1,c},\cdots,\mathbf{a}_{i_{\left|\mathcal{I}_c\right|},c} \right]^T,
\end{equation*}
$\mathbf{a}_{i,c}=\frac{\mathbf{s}_c-\mathbf{s}_i}{d_{i,c}}-\mathbf{a}_{0,c}$, and $\mathbf{a}_{0,c}=\frac{\mathbf{s}_c-\mathbf{s}_0}{d_{0,c}}$. The position error in meters for UT $c$ is finally computed as $\sqrt{\text{CRLB}_c}$ \cite{12}. 

\subsection{Channel Model}
%Particularly, the free space path loss and rain attenuation are generally considered to construct the channel model.
Particularly, the free space path loss is considered to construct the channel model. According to \cite{res1,res2,res3}, the channel vector from satellite $i$ to UT $c$ is modeled as
%\begin{equation}
%\mathbf{h}_{i,c}=\sqrt{L_{i,c}\kappa}e^{j\theta_{i,c}}\mathbf{v}_{i,c},
%\end{equation}
\begin{equation}
\mathbf{h}_{i,c}=\sqrt{L_{i,c}}e^{j\theta_{i,c}}\mathbf{v}_{i,c},
\end{equation}
where $\theta_{i,c}$ is the phase vector with uniform distribution over $[0,2\pi)$,
%$\kappa$ is modeled as a log-normal random variable, i.e., $\text{ln}(\kappa(\text{dB})) \sim \mathcal{N}(u_{\kappa}, \sigma_{\kappa}^2) \ $with $\kappa(\text{dB})$ being the \text{dB} form of $\kappa$.
$L_{i,c}$ is the free space loss given by
\begin{equation}
L_{i,c}(\text{dB})=20lg(f)+20lg(d_{i,c})+32.4,
\end{equation}
and $\mathbf{v}_{i,c}$ is the UPA response vector.
Based on \cite{14}, the UPA response vector $\mathbf{v}_{i,c}$ can be further calculated as
\begin{equation}\label{7}
 \mathbf{v}_{i,c}=\mathbf{v}_x\left(\theta_{i,c}^x\right)\otimes\mathbf{v}_y\left(\theta_{i,c}^y\right),
\end{equation}
\begin{equation*}
	\mathbf{v}_x\left(\theta_{i,c}^x\right)=\frac{1}{N_x}\left[1 , e^{-j\pi\theta_{i,c}^x}, ..., e^{-j\pi(N_x-1)\theta_{i,c}^x}\right]^T,
\end{equation*}
\begin{equation*}
	\mathbf{v}_y\left(\theta_{i,c}^y\right)=\frac{1}{N_y}\left[1 , e^{-j\pi\theta_{i,c}^y}, ..., e^{-j\pi(N_y-1)\theta_{i,c}^y}\right]^T,
\end{equation*}
where $\mathbf{v}_x$ and $\mathbf{v}_y$ are the array response vector of the angle with respect to the x and y-axis.
The parameter $\theta_{i,c}^x$ and $\theta_{i,c}^y$ are related to physical angles via $\theta_{i,c}^x=\sin(\varphi_{i,c}^y)\cos(\varphi_{i,c}^x)$ and $\theta_{i,c}^y=\cos(\varphi_{i,c}^y)$, where $\varphi_{i,c}^x$ and $\varphi_{i,c}^y$ are the angles with respect to the x and y axis associated with the propagation path from $i$-th satellite to UT $c$.

\subsection{Problem Formulation}
The positioning-oriented beam scheduling and beamforming design is formulated as the following problem, aiming to optimize UT TDOA positioning accuracy, which is given by
\begin{equation}\label{5}
    \begin{split}
	&\mathcal{P}_0:\min\limits_{\{\mathbf{w}_{i,c}\}, \{\delta_{i,c}\}} \sum\limits_{c=1}^{C} F_c\left(\mathbf{R}_c\left(\{\mathbf{w}_{i,c}\}, \{\delta_{i,c}\}\right),\mathbf{A}_c\left(\{\delta_{i,c}\}\right)\right)  \\%
 	%&P_0:\min\limits_{ \{\mathbf{W}_i\}_{i\in \mathcal{I}}, \{\delta_{i,c}\}_{i\in \mathcal{I},c=1,\cdots,C}} \frac{1}{C}\sum\limits_{c=1}^{C} F_c\left(\mathbf{R}_c\left(\mathbf{W}_i, \delta_{i,c}\right),\mathbf{A}_c\left(\delta_{i,c}\right)\right) \notag \\
	s.t.\quad& C_1:\ \sum\limits_{c=1}^C\delta_{i,c}\leq K, \ \forall i  \in \mathcal{I},                 \\
	& C_2:\ \sum\limits_{i\in \mathcal{I}}\delta_{i,c}= I_{\text{TDOA}},\ \forall c \in \mathcal{C},       \\	
	& C_3:\ \delta_{i,c}\in \{0,1\},\ \forall i  \in \mathcal{I} ,\ \forall c \in \mathcal{C},       \\	
	& C_4:\ \vert\vert \mathbf{w}_{i,c}\vert\vert_2^2=P\cdot \delta_{i,c},\ \forall i  \in \mathcal{I} ,\ \forall c \in \mathcal{C},
    \end{split}
\end{equation}
%where $c_1,\cdots,c_{{\left|\mathcal{C}_i\right|}}$ are the indexes of UTs served by satellite $i$ and \textcolor{red}{$F_c=\text{CRLB}_c$}.
where {$F_c=\text{CRLB}_c$}. Constraint $C_1$ states that each satellite can generate at most $K$ beams simultaneously due to limited hardware capability. Constraint $C_2$ ensures that each UT receives positioning signals from $I_{\text{TDOA}}$ beams from different satellites alongside the reference satellite. Constraint $C_3$ means $\delta_{i,c}$ is a 0-1 indicator variable. Constraint $C_4$ is the beam transmission power constraint, and $\mathbf{w}_{i,c}$ is an all-zero column vector of dimension $N$ when $\delta_{i,c}=0$.

\section{Dynamic UT SINR Threshold Adjustment Based Satellite Beamforming}
\label{section3}

In \eqref{5}, we seek to find beamforming design $\{\mathbf{w}_{i,c}\}$ and beam scheduling plan $\{\delta_{i,c}\}$ that optimize user positioning accuracy. Unfortunately, finding a globally optimal solution in polynomial time is impossible due to the non-convex objective function and mixed-integer structure.
To make the problem more tractable, in this section, we first deal with positioning-oriented beamforming under any pre-fixed $\{\delta_{i,c}\}$,
and the corresponding problem is given as
\begin{equation}\label{8}
\begin{split}
	%&P_1:\min\limits_{\{\mathbf{W}_i\}_{i\in \mathcal{I}}} \sum\limits_{c\in \mathcal{C}} F_c(\mathbf{W}_i\vert \{\delta_{i,c}\}_{i\in \mathcal{I},c=1,\cdots,C}) \notag \\
 	&\mathcal{P}_1:\min\limits_{\{\mathbf{w}_{i,c}\}} \sum\limits_{c=1}^C F_c(\{\mathbf{w}_{i,c}\}\vert \{\delta_{i,c}\}) \\
	s.t. \quad& C_4:\ \vert\vert \mathbf{w}_{i,c}\vert\vert_2^2=P\cdot \delta_{i,c},\ \forall i  \in \mathcal{I} ,\ \forall c \in \mathcal{C}.
\end{split}
\end{equation}

In the following proposition, we first highlight the intrinsic relationship between UT's positioning accuracy and UT's perceived SINR from a single satellite to guide beamforming algorithm design.
\begin{proposition}
  $F_c$ is monotonically decreasing with respect to $\text{SINR}_{i,c}$ with the gradient of $F_c$ with respect to $\text{SINR}_{i,c}$ given by
\begin{equation}
 \frac{\partial F_c}{\partial \text{SINR}_{i,c}}=-\frac{3\text{tr}({\mathbf{R}_c^{-1}(\widehat{i},:)\mathbf{A}_c(\mathbf{A}_c^T\mathbf{R}_c^{-1}\mathbf{A}_c)^{-2}\mathbf{A}_c^T \mathbf{R}_c^{-1}(:,\widehat{i})})}{4\pi^2B^2\text{SINR}_{i,c}^2},
\end{equation}
where satellite $i$ is the $\widehat{i}$-th satellite in set $\mathcal{I}_c$ serving UT $c$, $\mathbf{A_c}$ is a column full rank matrix with probability one and $\mathbf{R_c}$ is positive definite.
\end{proposition}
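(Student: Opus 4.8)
The plan is to isolate the SINR-dependence of $F_c$ and then apply the chain rule. By \eqref{3}, $\text{SINR}_{i,c}$ enters the expression \eqref{4} for $F_c$ only through the scalar $\sigma_{i,c}^2=\frac{3}{4\pi^2B^2\,\text{SINR}_{i,c}}$, whereas the Jacobian $\mathbf{A}_c$ depends solely on the satellite--UT geometry and not on any SINR. Hence I would write $\frac{\partial F_c}{\partial \text{SINR}_{i,c}}=\frac{\partial F_c}{\partial \sigma_{i,c}^2}\cdot\frac{\partial \sigma_{i,c}^2}{\partial \text{SINR}_{i,c}}$, where the second factor is immediately $-\frac{3}{4\pi^2B^2\,\text{SINR}_{i,c}^2}$.

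For the first factor I would invoke the standard matrix differentials $d\,\text{tr}(\mathbf{M}^{-1})=-\text{tr}(\mathbf{M}^{-1}(d\mathbf{M})\mathbf{M}^{-1})$ and $d(\mathbf{M}^{-1})=-\mathbf{M}^{-1}(d\mathbf{M})\mathbf{M}^{-1}$. Setting $\mathbf{M}_c:=\mathbf{A}_c^T\mathbf{R}_c^{-1}\mathbf{A}_c$ so that $F_c=\text{tr}(\mathbf{M}_c^{-1})$, these reduce the computation to $\partial\mathbf{R}_c/\partial\sigma_{i,c}^2$. The key observation is that, writing $\widehat{i}$ for the index of satellite $i$ within $\mathcal{I}_c$, the scalar $\sigma_{i,c}^2$ occurs in the covariance matrix $\mathbf{R}_c$ only in its $(\widehat{i},\widehat{i})$ entry, so $\partial\mathbf{R}_c/\partial\sigma_{i,c}^2=\mathbf{e}_{\widehat{i}}\mathbf{e}_{\widehat{i}}^T$ with $\mathbf{e}_{\widehat{i}}$ the $\widehat{i}$-th canonical basis vector. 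Chaining the two differentials (the two nested inverses produce a double sign change), and then using the cyclic property of the trace together with the symmetry of $\mathbf{R}_c^{-1}$ and $\mathbf{M}_c^{-1}$, collapses $\partial F_c/\partial\sigma_{i,c}^2$ to the scalar $\mathbf{R}_c^{-1}(\widehat{i},:)\,\mathbf{A}_c\mathbf{M}_c^{-2}\mathbf{A}_c^T\,\mathbf{R}_c^{-1}(:,\widehat{i})$, which equals its own trace; multiplying by the second factor then produces exactly the claimed gradient.

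For the sign I would first record the two nondegeneracy facts. The matrix $\mathbf{R}_c$ is the covariance of a non-degenerate Gaussian (equivalently $\sigma_{0,c}^2\mathbf{1}\mathbf{1}^T$ plus a strictly positive diagonal), hence positive definite; and for generic satellite positions the rows of $\mathbf{A}_c$ (differences of unit direction vectors) are linearly independent, so $\mathbf{A}_c$ has full column rank with probability one. Consequently $\mathbf{M}_c=\mathbf{A}_c^T\mathbf{R}_c^{-1}\mathbf{A}_c\succ 0$, so $\mathbf{M}_c^{-2}=(\mathbf{M}_c^{-1})^T\mathbf{M}_c^{-1}\succeq 0$. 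Writing $\mathbf{b}:=\mathbf{R}_c^{-1}(:,\widehat{i})$ and using $\mathbf{R}_c^{-1}(\widehat{i},:)=\mathbf{b}^T$, the scalar in the gradient is $\mathbf{b}^T\mathbf{A}_c\mathbf{M}_c^{-2}\mathbf{A}_c^T\mathbf{b}=\|\mathbf{M}_c^{-1}\mathbf{A}_c^T\mathbf{b}\|_2^2\ge 0$, so $\partial F_c/\partial\text{SINR}_{i,c}\le 0$ (and is strictly negative unless $\mathbf{A}_c^T\mathbf{b}=\mathbf{0}$, which fails generically). This establishes the monotone decrease of $F_c$ in $\text{SINR}_{i,c}$.

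The only genuine obstacle I anticipate is the bookkeeping in the matrix-differential step: correctly propagating the two nested inverses ($\mathbf{R}_c^{-1}$ sitting inside $\mathbf{M}_c^{-1}$), tracking the signs so that the double negation reproduces the leading minus sign in the stated formula, and verifying that the sandwiched expression indeed reduces to a single row-by-column scalar. Once that identity is in hand, positive definiteness of $\mathbf{M}_c$ makes the monotonicity an immediate corollary.
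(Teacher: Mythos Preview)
Your proposal is correct and follows the same overall skeleton as the paper (factor through $\sigma_{i,c}^2$, then apply the chain rule), but the derivative computation itself is carried out differently. The paper obtains $\partial F_c/\partial\sigma_{i,c}^2$ by a first-principles perturbation: it sets $\widetilde{\sigma}_{i,c}^2=\sigma_{i,c}^2+\Delta^2$, applies the Sherman--Morrison identity to expand $\widetilde{\mathbf{R}}_c^{-1}$ to first order in $\Delta^2$, substitutes into $F_c$, and reads off the derivative as the limit $(\widetilde{F}_c-F_c)/\Delta^2$. Your route via the matrix differentials $d\,\mathrm{tr}(\mathbf{M}^{-1})$ and $d(\mathbf{M}^{-1})$ together with $\partial\mathbf{R}_c/\partial\sigma_{i,c}^2=\mathbf{e}_{\widehat{i}}\mathbf{e}_{\widehat{i}}^T$ is more compact and avoids the explicit limit, and your rewriting of the numerator as $\|\mathbf{M}_c^{-1}\mathbf{A}_c^T\mathbf{b}\|_2^2$ makes the sign transparent. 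On the other hand, the paper invests more in the ``column full rank with probability one'' claim: it gives an explicit geometric argument (projecting satellite directions onto a unit sphere and showing the degenerate configurations lie on a one-dimensional curve inside a two-dimensional distribution, hence have probability zero), whereas you only assert genericity. If you want your proof to stand alone, that step deserves a sentence or two more.
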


\begin{proof}
First, we leverage Fig. \ref{figA} to prove that $\mathbf{A}_c$ is a column full rank matrix with probability one when ${\left|\mathcal{I}_c\right|}=3$ by contradiction. As shown in Fig. \ref{figA}, $O,i_0,i_1,i_2,i_3$ are the positions of UT $c$, the reference satellite and satellite $i_1,i_2,i_3$. $O_0,O_1,O_2,O_3$ are the intersection points of line $Oi_0,Oi_1,Oi_2,Oi_3$ with the unit sphere whose center is $O$, respectively. The satellites are assumed to be distributed in sphere $P_1$ following a two-dimensional distribution with a probability density function.
%As $\frac{\mathbf{s}_c-\mathbf{s}_i}{d_{i,c}}$ represents the unit vectors pointing to satellite $i$ from UT $c$, i.e. $\overrightarrow{OO_i}$,
Since the column of $\mathbf{A}_c^T$, $\mathbf{a}_{i,c}=\frac{\mathbf{s}_c-\mathbf{s}_i}{d_{i,c}}-\frac{\mathbf{s}_c-\mathbf{s}_0}{d_{0,c}}
$, represents the difference between the unit vectors pointing to satellite $i$ and the reference satellite from UT, i.e. $\overrightarrow{OO_i}-\overrightarrow{OO_0}$, $\mathbf{a}_{1,c}$, $\mathbf{a}_{2,c}$ and $\mathbf{a}_{3,c}$ correspond to $\overrightarrow{O_0O_1}$, $\overrightarrow{O_0O_2}$ and $\overrightarrow{O_0O_3}$, respectively.
If $\mathbf{A}_c$ is not a full rank matrix, $\mathbf{a}_{3,c}$ can be represented as a linear combination of $\mathbf{a}_{1,c}$ and $\mathbf{a}_{2,c}$, which means $\overrightarrow{O_0O_3}$ should lie in  plane $P_2$ which is spanned by $\overrightarrow{O_0O_1}$ and $\overrightarrow{O_0O_2}$. As $\overrightarrow{OO_3}$ is a unit vector, $O_3$ should lie in circle $L_1$, the intersecting curve of the unit sphere and plane $P_2$.
Further, as $O_3$ is the intersection point of the line $Oi_3$ with the unit sphere, we can conclude that the point $i_3$ falls on circle $L_2$, the projection of circle $L_1$ onto sphere $P_1$ on the basis of $O$. However, the probability that the point $i_3$ falls onto circle $L_2$ is zero because the probability equals the integration of a 2-dimensional probability density function over the 1-dimensional curve. Since a contradiction occurs, the assumption that $\mathbf{A}_c$ is not a full rank matrix does not hold. Similarly, we can select 3 linearly independent columns from $\mathbf{A}_c^T$ to prove that $\text{rank}(\mathbf{A}_c)=3$ when ${\left|\mathcal{I}\right|}>3$, which means $\mathbf{A}_c$ is a column full rank matrix.
	\begin{figure}[htbp]
        \color{red}
		\centering
		\includegraphics[width=0.5\textwidth]{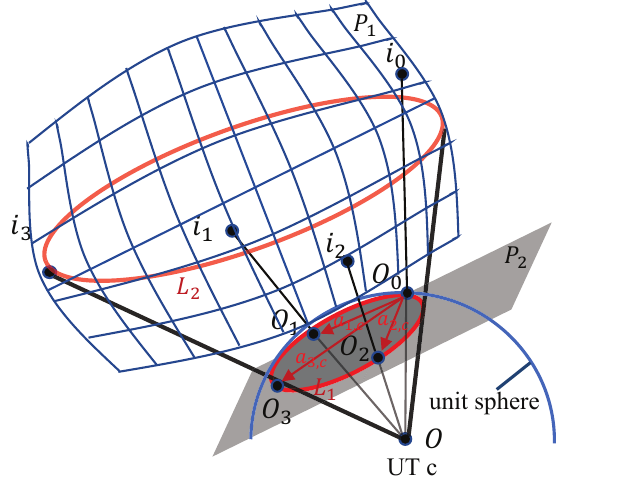}
        \centering
		\caption{Spatial relationships among UT $c$ and satellites.}
		\label{figA}
	\end{figure}
	
	Next, we show that $\mathbf{R}_c$ is positive definite. $\mathbf{R}_c$ in (\ref{4}) can be re-written as $\mathbf{R}_c=\mathbf{U}_{c}+\mathbf{V}_{c}$ as follows:
	\begin{equation*}
		\mathbf{U}_{c}=\left[
		\begin{array}{cccc}
			\sigma_{0,c}^2 & \sigma_{0,c}^2 & \cdots &\sigma_{0,c}^2 \\
			\sigma_{0,c}^2 & \sigma_{0,c}^2  & \cdots &\sigma_{0,c}^2 \\
			\vdots & \vdots & \ddots&\vdots \\
			\sigma_{0,c}^2 & \sigma_{0,c}^2 & \cdots&\sigma_{0,c}^2  \\
		\end{array}
		\right]
	\end{equation*}
 and
	\begin{equation*}
		\mathbf{V}_{c}=\left[
		\begin{array}{cccc}
			\sigma_{i_1,c}^2 &0 & \cdots &0 \\
			0 & \sigma_{i_2,c}^2  & \cdots &0\\
			\vdots & \vdots & \ddots&\vdots \\
			0 & 0& \cdots&\sigma_{i_{\left|\mathcal{I}_c\right|},c}^2 \\
		\end{array}
		\right].
	\end{equation*}
	The eigenvalues of $\mathbf{U}_{c}$ and $\mathbf{V}_{c}$ equal $\sigma_{0,c}^2,0,\cdots,0$ and $\sigma_{i_1,c}^2,\sigma_{i_2,c}^2,\cdots,\sigma_{i_{\left|\mathcal{I}_c\right|},c}^2$, respectively. Applying the Rayleigh-Ritz quotient result in \cite{17}, we can conclude that for any ${\left|\mathcal{I}\right|}$-dimensional vector $\mathbf{x}$, $0 \leq \frac{\mathbf{x}^H\mathbf{U}_{c}\mathbf{x}}{\mathbf{x}^H\mathbf{x}} \leq \sigma_{0,c}^2$ and $\min_i \sigma_{i,c}^2 \leq \frac{\mathbf{x}^H\mathbf{V}_{c}\mathbf{x}}{\mathbf{x}^H\mathbf{x}} \leq \max_i \sigma_{i,c}^2$. Thus, $$\frac{\mathbf{x}^H(\mathbf{U}_{c}+\mathbf{V}_{c})\mathbf{x}}{\mathbf{x}^H\mathbf{x}} \geq \min_i \sigma_{i,c}^2 > 0,$$ which means $\mathbf{R}_c$ is positive definite.

Further, to derive the closed-form expression of
$\frac{\partial F_c}{\partial \text{SINR}_{i,c}}$, we define $\widetilde{\sigma}_{i,c}^2=\sigma_{i,c}^2+\Delta^2$.
	Accordingly, $F_c$ and $\mathbf{R}_c$ change to $\widetilde{F}_c$ and $\widetilde{\mathbf{R}}_c$, respectively. Denote $\mathbf{v}_{\widehat{i}}=\left[0,\cdots,\Delta,\cdots,0\right]^T$ as a vector where only the $\widehat{i}$-th component equals $\Delta$ and all the others are zero. Then, using the fact that
 $$(\mathbf{X}+\mathbf{x}\mathbf{y}^T)^{-1}=\mathbf{X}^{-1}-\frac{\mathbf{X}^{-1}\mathbf{x}\mathbf{y}^T\mathbf{X}^{-1}}{1+\mathbf{y}^T\mathbf{X}^{-1}\mathbf{x}}$$
 for any vector $\mathbf{x},\mathbf{y} \in \mathbb{C}^{\vert \mathcal{I}\vert \times 1}$ and non-singular matrix $\mathbf{X}$, the expression of $\widetilde{\mathbf{R}}_c$ can be written as
	\begin{equation} \label{9}	
		\begin{split}
			\widetilde{\mathbf{R}}_c&=(\mathbf{R}_c+\mathbf{v}_{\widehat{i}}\mathbf{v}_{\widehat{i}}^T)^{-1} \\
			&=\mathbf{R}_c^{-1}-\frac{\Delta^2\mathbf{R}_c^{-1}(:,\widehat{i})\mathbf{R}_c^{-1}(\widehat{i},:)}{1+\Delta^2\mathbf{R}_c^{-1}(\widehat{i},\widehat{i})}\\
			&\xlongequal{\Delta^2 \to 0} \mathbf{R}_c^{-1}-\Delta^2\mathbf{R}_c^{-1}(:,\widehat{i})\mathbf{R}_c^{-1}(\widehat{i},:).
		\end{split}
	\end{equation}									  							
	By substituting (\ref{9}) into \eqref{4}, we have
	\begin{equation} \label{10}	
		\widetilde{F}_c=\text{tr}((\mathbf{A}_c^T\mathbf{R}_c^{-1}\mathbf{A}_c)^{-1})+\Delta^2\text{tr}( \mathbf{z}_{i,c}^T(\mathbf{A}_c^T\mathbf{R}_c^{-1}\mathbf{A}_c)^{-2}\mathbf{z}_{i,c}).
	\end{equation}	
Then,
	\begin{equation}\label{11}
		\frac{\partial F_c}{\partial \sigma_{i,c}^2}=\lim\limits_{\Delta^2\to 0} \frac{\widetilde{F}_c-F_c}{\Delta^2}=\text{tr}( \mathbf{z}_{i,c}^T(\mathbf{A}^T\mathbf{R}^{-1}\mathbf{A})^{-2}\mathbf{z}_{i,c}),
	\end{equation}
	where $\mathbf{z}_{i,c}=\mathbf{A}_c^T \mathbf{R}_c^{-1}(:,\widehat{i})$.
	By substituting (\ref{3}) into (\ref{11}) and applying the chain rule, we can get
	\begin{equation}\label{12}
		\frac{\partial F_c}{\partial \text{SINR}_{i,c}}=-\frac{3\text{tr}({\mathbf{z}_{i,c}^T(\mathbf{A}_c^T\mathbf{R}_c^{-1}\mathbf{A}_c)^{-2}\mathbf{z}_{i,c}})}{4\pi^2B^2\text{SINR}_{i,c}^2}\\.
	\end{equation}
	Since both $\mathbf{A}_c$ and $\mathbf{A}_c^T\mathbf{R}_c^{-1}(:,\widehat{i})$ are column full \text{rank} matrices,    $\mathbf{R}_c^{-1}(\widehat{i},:)\mathbf{A}_c(\mathbf{A}_c^T\mathbf{R}_c^{-1}\mathbf{A}_c)^{-2}\mathbf{A}_c^T \mathbf{R}_c^{-1}(:,\widehat{i})$ is also positive definite for positive definite matrix $\mathbf{R}_c$ \cite{18}, which means $\frac{\partial F_c}{\partial SINR_{i,c}}< 0$.
\end{proof}

Proposition 1 reveals the monotonic relationship between UT SINR of receiving the signal sent by one beam and its positioning accuracy under a specific UT-satellite-topology, and it inspires us to gradually improve $\text{SINR}_{i,c}$ to enhance overall positioning performance.
To this end, we define $\gamma_{i,c}$ as the threshold for $\text{SINR}_{i,c}$, i.e. $\gamma_{i,c}\leq \text{SINR}_{i,c}$, and successively raise $\gamma_{i,c}$ with equal step size. Specifically, we raise up $\gamma_{i,c}$ for one UT based on the derived gradient \eqref{12} as
\begin{equation}
\label{searching strategy}
\gamma_{i,c}=\gamma_{i,c}+\gamma_{0},\ c=\underset{c'\in \mathcal{C}_i} {\text{arg min}}\frac{\partial F_{c'}}{\partial \text{SINR}_{i,c'}}.
\end{equation}
For each update, the following problem is solved to check whether the new $\{\gamma_{i,c}\}_{c\in \mathcal{C}_i}$ is feasible.
\begin{equation}\label{13}
	\begin{split}
	&\mathcal{P}_2:\max\limits_{t,{\{\mathbf{w}_{i,c}\}}_{c\in \mathcal{C}_i}} t \\
	s.t.\quad &\frac{\lvert\mathbf{h}_{i,c}^T\mathbf{w}_{i,c}\rvert^2}{\sum\limits_{c'\neq c}\lvert\mathbf{h}_{i,c}^T\mathbf{w}_{i,c'}\rvert^2+\epsilon_c^2}\geq t\cdot {\gamma_{i,c}} ,\ c\in \mathcal{C}_i,\\
	& \mathbf{w}_{i,c}^H\mathbf{w}_{i,c}= P,\ c\in \mathcal{C}_i,
    \end{split}
\end{equation}
If $t^*\geq 1$, it is feasible and we can continue to raise $\gamma_{i,c}$ with \eqref{searching strategy}, while $t^*< 1$ otherwise.

Note that problem \eqref{13} can be well solved by SDR \cite{11, 19} by converting it into the following form:
\begin{equation}\label{14}
	\begin{split}
	&\mathcal{P}_3:\max\limits_{t,{\{\mathbf{Q}_{i,c}\}_{c\in \mathcal{C}_i}}} t  \\
	s.t.\quad &\frac{1}{\gamma_{i,c}}\frac{\text{tr}(\mathbf{{H'}}_{i,c}^T\mathbf{Q}_{i,c})}{\sum\limits_{c'\neq c}\text{tr}(\mathbf{{H'}}_{i,c}^T\mathbf{Q}_{i,c'})+\epsilon_c^2}\geq t,\ c\in \mathcal{C}_i,\\
	& \text{tr}(\mathbf{Q}_{i,c})= P,\ c\in \mathcal{C}_i,\\
	&\text{rank}(\mathbf{Q}_{i,c})=1,\ c\in \mathcal{C}_i,
    \end{split}
\end{equation}
where $\mathbf{{H'}}_{i,c}=\mathbf{h}_{i,c}\mathbf{h}_{i,c}^H$, $\mathbf{Q}_{i,c}=\mathbf{w}_{i,c}\mathbf{w}_{i,c}^H$, and the only non-convex constraint $\text{rank}(\mathbf{Q}_{i,c})=1$ can be relaxed. In this way, problem (\ref{14}) can be solved as a general convex optimization problem. The results of the original problem (\ref{13}) can be obtained by applying \text{rank}-one approximation based on the results of problem (\ref{14}) via $\mathbf{w}_{i,c}^*=\sqrt{\lambda^*}\mathbf{b}^*$, where $\lambda^*$ and $\mathbf{b}^*$ are the the largest eigenvalue and respective eigenvector of $\mathbf{Q}_{i,c}^*$, respectively.

Our basic description of dynamic UT SINR threshold adjustment (DSTA) based beamforming algorithm is now complete, whose detail is summarized in Algorithm 1, which assumes $T$ different values for $\gamma_{i,c}$ with an equal step size. Compared with the algorithm in \cite{19}, which has the same search space size, Algorithm 1 reduces the number of iterations from $K^T$ to $K\cdot T$. In addition, the SDR technique can be used to produce an approximated solution to \eqref{14} with a complexity of $O((N\cdot K)^{3.5})$\cite{11}. Since we need to adjust the SINR threshold $\gamma_{i,c}$ at most $O(K\cdot T)$ times with \eqref{12}, the overall time complexity of Algorithm 1 is no more than $O(N^{3.5}\cdot K^{4.5}\cdot T)$ for each satellite, which is polynomial in $N$.

\begin{algorithm}[htbp]
	\caption{DSTA-based satellite beamforming algorithm.}
	\begin{algorithmic}[1]
		\renewcommand{\algorithmicrequire}{ \textbf{Input:}}
		%\REQUIRE $\{\mathbf{h}_{i,c}\},\ \{\mathbf{A}_c\},\ \{\delta_{i,c}\},\ \{\epsilon_c^2\},\ i  \in \mathcal{I},\ c  \in \mathcal{C}$
  		\REQUIRE $\mathbf{h}_{i,c},\ \mathbf{A}_c,\ \delta_{i,c},\ \epsilon_c^2,\ \forall i  \in \mathcal{I},\ \forall c  \in \mathcal{C}$
		\renewcommand{\algorithmicrequire}{ \textbf{Output:}}
		%\REQUIRE $\{\mathbf{w}_{i,c}\},\ i  \in \mathcal{I},\ c  \in \mathcal{C}_i$
  		\REQUIRE $\mathbf{w}_{i,c},\ \forall i  \in \mathcal{I},\ \forall c  \in \mathcal{C}_i$
		\FOR{satellite $i \in \mathcal{I}$}
		\STATE Initialize $\gamma_{i,c}=\gamma'$, $c \in \mathcal{C}_i$, and all $\gamma_{i,c}$ are set to be feasible to raise.
		\FOR{UT $c\in \mathcal{C}_i$}
		\STATE Compute $\frac{\partial F_{c}}{\partial \text{SINR}_{i,c}}$ according to \eqref{12}.
		\ENDFOR
		\WHILE{there exists $\gamma_{i,c}$ which is feasible to raise and $\gamma_{i,c}<\gamma_{max}$}
        \STATE Select $c=\text{arg}\ \underset {c'}{\text{min}} \frac{\partial F_{c'}}{\partial \text{SINR}_{i,c'}}$.
        \STATE Update $\gamma_{i,c}=\gamma_{i,c}+\gamma_0$.
		\STATE Solve problem (\ref{14}) by SDR to get $t^*$ and $\mathbf{Q}_{i,c}^*$.
		\IF{$t^* \geq1$}
        \STATE Update $\text{SINR}_{i,c}$ according to \eqref{2} .
        \STATE Update $\frac{\partial F_{c}}{\partial \text{SINR}_{i,c}}$ according to \eqref{12}.
		\ELSE
		\STATE Update $\gamma_{i,c}=\gamma_{i,c}-\gamma_{0}$, and set $\gamma_{i,c}$ to be infeasible to raise.
		\ENDIF
		\ENDWHILE
		\STATE Apply \text{rank}-one approximation of  $\mathbf{Q}^*_{i,c}$ to get $\mathbf{w}^*_{i,c}$.
		\ENDFOR
	\end{algorithmic}
	\label{algorithm1}
\end{algorithm}

\section{Beam Scheduling Algorithm Design}
\label{section4}

After finishing the beamforming design under any pre-fixed beam scheduling, the next part of this paper considers the beam scheduling optimization problem, which is expressed as
\begin{equation}\label{15}
	\begin{split}
		&\mathcal{P}_4:\ \min\limits_{\{\delta_{i,c}\}} \sum\limits_{c \in \mathcal{C}} F_c(\{\delta_{i,c}\},\{\mathbf{w}^*_{i,c}(\{\delta_{i,c}\})\})  \\
		s.t.\ & C_1:\ \sum\limits_{c=1}^C\delta_{i,c}\leq K,\ i  \in \mathcal{I} ,   \\
		& C_2:\ \sum\limits_{i\in \mathcal{I}}\delta_{i,c}= I_{\text{TDOA}},\ \forall c \in \mathcal{C},       \\	
		& C_3:\ \delta_{i,c}\in \{0,1\},\ i\in \mathcal{I},\ \forall c \in \mathcal{C}.      \\	
	\end{split}
\end{equation}

 Considering the coupling of beam scheduling and beamforming design, a straightforward way to solve the above problem is to design a two-layer iterative algorithm, where the outer layer optimizes beam scheduling while the inner layer optimizes beamforming under fixed beam scheduling with our proposed DSTA-based beamforming algorithm.
 The outer layer solves a 0-1 programming, and particle swarm optimization or branch-and-bound method can be adopted. Nevertheless, such a design can incur high complexity, which is undesirable for practical implementation. Hence, in this paper, a fast and efficient beam scheduling algorithm is designed, which decouples beam scheduling from beamforming and only relies on channel state information and satellite ephemeris. To guide the design, we first examine an approximated expression of
$F_c$ to intuitively show the impacts of beam scheduling on UT positioning performance.

\begin{proposition}\label{propo2}
When $\sigma_{0,c}^2$ is small enough, $\widehat{F}_c=\text{tr}((\sum_{i\in\mathcal{I}_c}\frac{\mathbf{a}_{i,c}\mathbf{a}_{i,c}^T}{\sigma_{i,c}^2})^{-1})$ can approximate the positioning performance $F_c$ of UT $c$.
\end{proposition}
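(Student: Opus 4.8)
The plan is to regard $F_c=\text{tr}\big((\mathbf{A}_c^{T}\mathbf{R}_c^{-1}\mathbf{A}_c)^{-1}\big)$ as a function of the single scalar $\sigma_{0,c}^{2}$, holding all other quantities fixed, and to show that it converges to $v^{2}\widehat{F}_c$ as $\sigma_{0,c}^{2}\to 0$; the factor $v^{2}$ is a universal constant and hence irrelevant to the scheduling decision, so this is exactly the sense in which $\widehat{F}_c$ ``approximates'' $F_c$. First I would reuse the decomposition $\mathbf{R}_c=\mathbf{V}_c+\sigma_{0,c}^{2}\mathbf{1}\mathbf{1}^{T}$ from the proof of Proposition~1, with $\mathbf{V}_c=\text{diag}(\sigma_{i_1,c}^{2},\dots,\sigma_{i_{|\mathcal{I}_c|},c}^{2})$ and $\mathbf{1}$ the all-ones vector, and apply the Sherman--Morrison identity already invoked in \eqref{9} to obtain
\begin{equation*}
\mathbf{R}_c^{-1}=\mathbf{V}_c^{-1}-\frac{\sigma_{0,c}^{2}\,\mathbf{V}_c^{-1}\mathbf{1}\mathbf{1}^{T}\mathbf{V}_c^{-1}}{1+\sigma_{0,c}^{2}\,\mathbf{1}^{T}\mathbf{V}_c^{-1}\mathbf{1}}=\mathbf{V}_c^{-1}+O(\sigma_{0,c}^{2}),
\end{equation*}
where the error bound holds in any matrix norm because the denominator is at least $1$ and $\mathbf{V}_c^{-1}\mathbf{1}\mathbf{1}^{T}\mathbf{V}_c^{-1}$ is a fixed finite matrix.

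Next I would substitute this into the Fisher information matrix; using $\mathbf{A}_c=\tfrac{1}{v}[\mathbf{a}_{i_1,c},\dots,\mathbf{a}_{i_{|\mathcal{I}_c|},c}]^{T}$ and the fact that $\mathbf{V}_c^{-1}$ is diagonal,
\begin{equation*}
\mathbf{A}_c^{T}\mathbf{R}_c^{-1}\mathbf{A}_c=\frac{1}{v^{2}}\sum_{i\in\mathcal{I}_c}\frac{\mathbf{a}_{i,c}\mathbf{a}_{i,c}^{T}}{\sigma_{i,c}^{2}}+O(\sigma_{0,c}^{2})=\frac{1}{v^{2}}\mathbf{M}_c+O(\sigma_{0,c}^{2}),
\end{equation*}
where $\mathbf{M}_c:=\sum_{i\in\mathcal{I}_c}\mathbf{a}_{i,c}\mathbf{a}_{i,c}^{T}/\sigma_{i,c}^{2}$, so that $\widehat{F}_c=\text{tr}(\mathbf{M}_c^{-1})$. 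By Proposition~1, $\mathbf{A}_c$ has full column rank with probability one and $\mathbf{V}_c^{-1}$ is positive definite (being diagonal with entries $1/\sigma_{i,c}^{2}>0$), hence $\mathbf{M}_c=v^{2}\mathbf{A}_c^{T}\mathbf{V}_c^{-1}\mathbf{A}_c$ is positive definite and in particular invertible. Since $\mathbf{X}\mapsto\text{tr}(\mathbf{X}^{-1})$ is continuous on the open set of invertible matrices, letting $\sigma_{0,c}^{2}\to 0$ gives $F_c\to v^{2}\,\text{tr}(\mathbf{M}_c^{-1})=v^{2}\widehat{F}_c$, which is the claim. If a quantitative error is desired, I would carry the perturbation one order further via $(\mathbf{X}+\mathbf{E})^{-1}=\mathbf{X}^{-1}-\mathbf{X}^{-1}\mathbf{E}\mathbf{X}^{-1}+O(\|\mathbf{E}\|^{2})$ at $\mathbf{X}=\tfrac{1}{v^{2}}\mathbf{M}_c$ to get $F_c=v^{2}\widehat{F}_c+O(\sigma_{0,c}^{2})$.

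The main obstacle is making the word ``approximate'' precise rather than the computation itself: one has to verify that the Sherman--Morrison correction is uniformly $O(\sigma_{0,c}^{2})$ and then apply continuity of the matrix inverse \emph{at an invertible point}, which is precisely where the full-column-rank conclusion of Proposition~1 is essential --- were $\mathbf{A}_c$ rank-deficient, the limiting Fisher matrix would be singular and $F_c$ would diverge, so the approximation would fail. A secondary point worth stating explicitly is that only $\sigma_{0,c}^{2}$ is driven to $0$ while the $\sigma_{i,c}^{2}$, $i\in\mathcal{I}_c$, remain bounded away from $0$, which keeps $\mathbf{V}_c^{-1}$ and $\mathbf{M}_c^{-1}$ controlled uniformly throughout the limit.
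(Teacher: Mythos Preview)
Your argument is correct and shares the same starting point as the paper: decompose $\mathbf{R}_c=\mathbf{V}_c+\sigma_{0,c}^{2}\mathbf{1}\mathbf{1}^{T}$ and apply Sherman--Morrison. The routes diverge after that. The paper applies Sherman--Morrison a \emph{second} time, now to $\mathbf{A}_c^{T}\mathbf{R}_c^{-1}\mathbf{A}_c=\mathbf{A}_c^{T}\mathbf{V}_c^{-1}\mathbf{A}_c-(\text{rank-one})$, obtaining the exact closed form
\[
F_c=\widehat{F}_c+\frac{\sigma_{0,c}^{2}\,\mathbf{u}_c^{T}(\mathbf{A}_c^{T}\mathbf{V}_c^{-1}\mathbf{A}_c)^{-2}\mathbf{u}_c}{1+\sigma_{0,c}^{2}\,\mathbf{u}_c^{T}(\mathbf{A}_c^{T}\mathbf{V}_c^{-1}\mathbf{A}_c)^{-1}\mathbf{u}_c},
\]
then observes this is concave in $\sigma_{0,c}^{2}$, bounds it by its tangent at the origin, and chains several matrix inequalities ($\mathbf{x}^{T}\mathbf{X}\mathbf{x}\le\|\mathbf{x}\|_2^{2}\text{tr}(\mathbf{X})$, $\text{tr}(\mathbf{X}^{2})\le\text{tr}(\mathbf{X})^{2}$, $\|\mathbf{a}_{i,c}\|_2<2$) to arrive at the explicit two-sided estimate
\[
0\le F_c-\widehat{F}_c\le \frac{4\,I_{\text{TDOA}}^{2}\,\sigma_{0,c}^{2}}{v^{2}\min_i\sigma_{i,c}^{4}}\,\widehat{F}_c^{2}.
\]
Your approach instead stops at the first Sherman--Morrison step and invokes continuity of $\mathbf{X}\mapsto\text{tr}(\mathbf{X}^{-1})$ at the invertible limit, yielding $F_c\to v^{2}\widehat{F}_c$ and a soft $O(\sigma_{0,c}^{2})$ remainder. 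This is shorter and conceptually cleaner, and you are right to flag the $v^{2}$ scaling (the paper silently absorbs it by redefining $\widehat{F}_c=\text{tr}((\mathbf{A}_c^{T}\mathbf{V}_c^{-1}\mathbf{A}_c)^{-1})$ inside the proof). What the paper's longer route buys is a concrete constant in front of $\sigma_{0,c}^{2}$, which makes ``small enough'' quantitative; if you want to match that, carrying your perturbation expansion one more step as you suggest and then bounding $\|\mathbf{M}_c^{-1}\|$ and the rank-one correction explicitly would recover a comparable estimate.
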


\begin{proof}
First, we show that the inequality $\mathbf{x}^T\mathbf{X}\mathbf{x}\leq\Vert \mathbf{x}\Vert_2^2\text{tr}(\mathbf{X})$ holds for any $\mathbf{x}$ and positive definite matrix $\mathbf{X}$. The eigenvalue decomposition of positive definite matrix $\mathbf{X}$ can be expressed as $\sum_{r=1}^{R}\lambda_r \boldsymbol{b}_r\boldsymbol{b}_r^T$, where $R, \lambda_r, \boldsymbol{b}_r$ are the \text{rank}, eigenvalues and eigenvectors, respectively. Applying Cauchy-Schwarz inequality, we have
\begin{equation*}
    \mathbf{x}^T\mathbf{X}\mathbf{x}=\sum_{r=1}^{R}\lambda_r \Vert\mathbf{x}^T\boldsymbol{b}_r\Vert_2^2
    \leq \sum_{r=1}^{R}\lambda_r \Vert\mathbf{x}\Vert_2^2=\Vert\mathbf{x}\Vert_2^2\text{tr}(\mathbf{X}).
\end{equation*}
		Since $\mathbf{R}_c$ can be written as $(\mathbf{V}_{c}+\sigma_{0,c}^2\mathbf{e} \mathbf{e}^T)^{-1}$, where  $\mathbf{e}=\left[1,\cdots,1\right]^T$,
  using the fact that $$(\mathbf{X}+\mathbf{x}\mathbf{y}^T)^{-1}=\mathbf{X}^{-1}-\frac{\mathbf{X}^{-1}\mathbf{x}\mathbf{y}^T\mathbf{X}^{-1}}{1+\mathbf{y}^T\mathbf{X}^{-1}\mathbf{x}},$$
  $F_c$ can be rewritten as
		\begin{equation}\label{16}
        \begin{split}
			F_c&=\text{tr}((\mathbf{A}_c^T\mathbf{R}_{c}^{-1}\mathbf{A}_c)^{-1})\\
   &=\text{tr}((\mathbf{A}_c^T\mathbf{V}_{c}^{-1}\mathbf{A}_c)^{-1})+\frac{\sigma_{0,c}^2\mathbf{u}_c^{T}(\mathbf{A}_c^T\mathbf{V}_{c}^{-1}\mathbf{A}_c)^{-2}\mathbf{u}_c}{1+\sigma_{0,c}^2\mathbf{u}_c^{T}(\mathbf{A}_c^T\mathbf{V}_{c}^{-1}\mathbf{A}_c)^{-1}\mathbf{u}_c},
        \end{split}
		\end{equation}
		 %Since $(\mathbf{A}^T\mathbf{V}_{c}^{-1}\mathbf{A})^{-1}$ is positive definite matrices, we can obtain
		%\begin{equation}
        %    \label{a1}
			%\mathbf{u}^{T}(\mathbf{A}^T\mathbf{V}_{c}^{-1}\mathbf{A})^{-1}\mathbf{u}
			%\leq\Vert \mathbf{u}\Vert_2^2 \text{tr}((\mathbf{A}^T\mathbf{V}_{c}^{-1}\mathbf{A})^{-1}).
		%\end{equation}
  where $\mathbf{u}_c=\frac{\mathbf{A}_c^T\mathbf{V}_{c}^{-1}\mathbf{e}}{\sqrt{1+\sum_{i\in \mathcal{I}_c}\sigma_{i,c}^2}}$.

  For positive definite matrices $(\mathbf{A}_c^T\mathbf{V}_{c}^{-1}\mathbf{A}_c)^{-1}$, $F_c$ in \eqref{16} can be viewed as a concave function of $\sigma_{0,c}^2$ with the derivative with respect to $\sigma_{0,c}^2$ equals $\mathbf{u}_c^{T}(\mathbf{A}_c^T\mathbf{V}_{c}^{-1}\mathbf{A}_c)^{-2}\mathbf{u}_c$ at $\sigma_{0,c}^2=0$. According to the first-order Taylor expansion of $F_c$ at $\sigma_{0,c}^2=0$, we can get
  		\begin{equation}\label{pro2}
			0\leq F_c-\widehat{F}_c\leq\sigma_{0,c}^2\mathbf{u}_c^{T}(\mathbf{A}_c^T\mathbf{V}_{c}^{-1}\mathbf{A}_c)^{-2}\mathbf{u}_c,
		\end{equation}
where $\widehat{F}_c=\text{tr}((\mathbf{A}_c^T\mathbf{V}_{c}^{-1}\mathbf{A}_c)^{-1})$. According to the conclusion $\text{tr}(\mathbf{X}^{2})\leq \text{tr}(\mathbf{X})^2$ for positive definite matrix $\mathbf{X}$ \cite{18}, we have the following inequality
\begin{equation}\label{a2}
	\begin{split}
 			\mathbf{u}_c^{T}(\mathbf{A}_c^T\mathbf{V}_{c}^{-1}\mathbf{A}_c)^{-2}\mathbf{u}_c&\leq\Vert \mathbf{u}_c\Vert_2^2\text{tr}((\mathbf{A}_c^T\mathbf{V}_{c}^{-1}\mathbf{A}_c)^{-2}) \\
  		&\leq\Vert \mathbf{u}_c\Vert_2^2\text{tr}({(\mathbf{A}_c^T\mathbf{V}_{c}^{-1}\mathbf{A}_c)^{-1}})^{2}.
    \end{split}
\end{equation}

         Since $\mathbf{a}_{i,c}$ represents the difference between two unit vectors as shown in Fig. \ref{figA}, we can get $\Vert\mathbf{a}_{i,c}\Vert_2<2$ and
         %When $\sigma_{0,c}^2\ll\frac{\min_i\sigma_i^4v^2}{4\mathcal{I}_{\text{TDOA}}^2}$, we can get
\begin{equation}\label{a3}
	\begin{split}
		\Vert \mathbf{u}_c\Vert_2^2&\leq\frac{1}{1+\sum_{i\in \mathcal{I}_c}\sigma_{i,c}^2}\sum_{i\in \mathcal{I}_c}\frac{1}{\sigma_{i,c}^4}\text{tr}(\mathbf{A}_c\mathbf{A}_c^T) \\
        &\leq\sum_{i\in \mathcal{I}_c}\frac{1}{\sigma_{i,c}^4}\sum_{i\in \mathcal{I}_c}\frac{1}{v^2}\mathbf{a}_{i,c}^T\mathbf{a}_{i,c} \\
        &\leq \frac{4\mathcal{I}_{\text{TDOA}}^2}{\min_i\sigma_{i,c}^4v^2}.%\ll 1.
    \end{split}
\end{equation}
		
By substituting \eqref{a2} and \eqref{a3} into \eqref{pro2}, we can get
\begin{equation}
\label{copy_of_pro2}
 0\leq F_c-\widehat{F}_c \leq \frac{4\mathcal{I}_{\text{TDOA}}^2\sigma_{0,c}^2}{\min_i\sigma_{i,c}^4v^2} \widehat{F}^2_c,
\end{equation}
which indicates that when $\sigma_{0,c}$ is small enough, $\widehat{F}_c=\text{tr}((\mathbf{A}_c^T\mathbf{V}_{c}^{-1}\mathbf{A}_c)^{-1})=\text{tr}((\sum_{i\in\mathcal{I}_c}\frac{\mathbf{a}_{i,c}\mathbf{a}_{i,c}^T}{\sigma_{i,c}^2})^{-1})$ can approximate $F_c$.
\end{proof}

Proposition 2 gives an approximation of UT positioning accuracy and further reveals its intrinsic relationship with beam scheduling result indicated by $\mathcal{I}_c$ that is the set of satellites serving UT $c$.
On the one hand, beam scheduling influences inter-beam/inter-user interference and hence $\sigma_{i,c}^2$ via $\text{SINR}_{i,c}$. On the other hand,
beam scheduling directly determines the relative positions among UTs and satellites, which impacts $\mathbf{a}_{i,c}$.
Inspired by these observations, a low-complexity heuristic beam scheduling algorithm is designed incorporating two steps.

In the first step, we focus on degrading inter-UT interference to achieve higher $\text{SINR}_{i,c}$, and the following metric $\rho_{i,c}$ is defined to
measure the similarity between $\mathbf{h}_{i,c}$ and the elements in $\{\mathbf{h}_{i,c'}\}_{c'\in \mathcal{C}_i}$, which is given by
\begin{equation}
	\label{17}
	\rho_{i,c}=\sum\limits_{c'\in \mathcal{C}_i\cup c}\frac{\mathbf{h}_{i,c}^T\mathbf{h}_{i,c'}}{\mathbf{h}_{i,c'}^T\mathbf{h}_{i,c'}}.
\end{equation}
Such metric is originally derived for ZF beamforming to find the users with minimum mutual interference \cite{19}.

In the second step, we consider the influence of beam scheduling on ${F}_c$ via $\mathbf{a}_{i,c}$, and the metric $\mu_{i,c}$ is defined to coarsely reflect the impact of $\mathbf{a}_{i,c}$ on ${F}_c$ as
\begin{equation}
\begin{split}
	\label{18}
	\mu_{i,c}
 &=\text{tr}((\sum_{i'\in\mathcal{I}_c\cup i}\mathbf{a}_{i',c}\mathbf{a}_{i',c}^T)^{-1})-\text{tr}((\sum_{i'\in\mathcal{I}_c}\mathbf{a}_{i',c}\mathbf{a}_{i',c}^T)^{-1})\\
 &=\sum_{r}\frac{1}{\lambda_r(\sum\limits_{i'\in\mathcal{I}_c\cup i}{\mathbf{a}_{i',c}\mathbf{a}_{i',c}^T})}-\sum_{r}\frac{1}{\lambda_r(\sum\limits_{i'\in\mathcal{I}_c}{\mathbf{a}_{i',c}\mathbf{a}_{i',c}^T})},
\end{split}
\end{equation}
where $\lambda_r(\mathbf{X})$ denotes the $r$-th eigenvalue of $\mathbf{X}$. To reduce the complexity of computing the eigenvalues in \eqref{18}, we further design a fast method for calculating $\mu_{i,c}$.

Suppose the eigenvalue decomposition of $\sum\limits_{i\in\mathcal{I}_c}{\mathbf{a}_{i,c}\mathbf{a}_{i,c}^T}$ is $  \sum_{r=1}^{R}{\lambda_r\mathbf{b}_r\mathbf{b}_r^T}$, where $R, \lambda_r, \boldsymbol{b}_r$ are the \text{rank}, eigenvalues and eigenvectors, respectively.
Moreover, $\mathbf{a}_{i,c}$ can be decomposed into two orthogonal parts. The first part is the projection onto the subspace spanned by $\{\mathbf{b}_r\}_{r=1}^R$, which can be calculated as $\sum_{r=1}^{R}\mathbf{a}_{i,c}^T\mathbf{b}_r\mathbf{b}_r$. The second part is the remaining components $\mathbf{l}_i$, which can be expressed as
\begin{equation}
	\label{19}
    \mathbf{l}_i=\mathbf{a}_{i,c} -\sum_{r=1}^{R}\mathbf{a}_{i,c}^T\mathbf{b}_r\mathbf{b}_r.
\end{equation}
Hence, we can get
\begin{equation}
    \begin{split}
    	\label{20}
    \sum\limits_{i'\in \mathcal{I}_c\cup i}{\mathbf{a}_{i',c}\mathbf{a}_{i',c}^T}&=\mathbf{a}_{i,c}\mathbf{a}_{i,c}^T+
    \sum\limits_{i'\in \mathcal{I}_c}{\mathbf{a}_{i',c}\mathbf{a}_{i',c}^T}  \\
    &=\sum_{r=1}^{R}{(\lambda_r+(\mathbf{a}_{i,c}^T\mathbf{b}_r)^2)\mathbf{b}_r\mathbf{b}_r^T}+\mathbf{l}_i\mathbf{l}_i^T.
    \end{split}
\end{equation}
Since $\mathbf{l}_i$ is orthogonal to the element in $\{\mathbf{b}_r\}_{r=1}^R$, we can get the eigenvalue decomposition expression of $\sum\limits_{i'\in \mathcal{I}_c\cup i}{\mathbf{a}_{i',c}\mathbf{a}_{i',c}^T}$ from \eqref{20}, and then when adding satellite $i$ to $\mathcal{I}_c$, the expression of rank, eigenvalue and eigenvectors of $\sum\limits_{i'\in\mathcal{I}_c}{\mathbf{a}_{i',c}\mathbf{a}_{i',c}^T}$ are updated as
\begin{subequations}
\label{21}
\begin{align}
\widetilde{R}&=\begin{cases}{R,}&{\text{if}}\ \Vert\mathbf{l}_i \Vert_2=\mathbf{0},\\
R+1,&\Vert\mathbf{l}_i \Vert_2\neq \mathbf{0},\\
\end{cases}\\
    \widetilde{\lambda}_r&=\begin{cases}\lambda_r+(\mathbf{a}_{i,c}\mathbf{b}_r)^2,&\ r=1,\cdots,R, \\
{\Vert\mathbf{l}_i \Vert_2^2},&\ r=R+1,\ {\text{if}}\ \Vert\mathbf{l}_i \Vert_2\neq \mathbf{0},
\end{cases} \\
\widetilde{\mathbf{b}}_{r}&=\begin{cases}{\mathbf{b}_r},&\ r=1,\cdots,R,\\
\mathbf{l}_i /\Vert\mathbf{l}_i
\Vert_2,&\ r=R+1,\ {\text{if}}\ \Vert\mathbf{l}_i \Vert_2\neq \mathbf{0}.\\
\end{cases}
\end{align}
\end{subequations}

With the above easily calculated equations, \eqref{18} can be fast computed with
\begin{equation}
	\label{22}
	\mu_{i,c}=
	\sum\limits_{r=1}^{\widetilde{R}}\frac{1}{\widetilde{\lambda}_r}-\sum_{r=1}^{R}\limits\frac{1}{\lambda_r}.
\end{equation}

Based on the above metric, instead of finding the optimal beam scheduling iteratively, we only need to find a beam scheduling plan that satisfies the constraints of \eqref{15} and provides good channel orthogonality and UT-satellite-topology geometry. Specifically, we first greedily select $m$ candidate satellites for each UT in each iteration according to the principle of minimizing channel similarity $\rho_{i,c}$. Then, with the principle of maximizing $\mu_{i,c}$, a satellite is finally selected to provide service for UT $c$. Finally, $\mathcal{I}_c$ and $\mathcal{C}_i$ are updated. The summary of the proposed heuristic beam scheduling (HBS) algorithm is given in Algorithm 2, whose complexity is $O(C\cdot I_{\text{TDOA}}\cdot I\cdot K)$.

Overall, we design a low-complexity joint beamforming and beam scheduling algorithm with the help of Proposition 1 and Proposition 2, which essentially reveal the correlation of positioning performance with SINR and GDOP. Compared with the genetic algorithms \cite{res6}, HBS-DSTA need not feed the output of the beamformer back to the scheduler with the complexity of $O(I\cdot N^{3.5}\cdot K^{4.5}\cdot T)+O(C\cdot I_{\text{TDOA}}\cdot I\cdot K)$.

\begin{algorithm}[htbp]
	\caption{The HBS algorithm.}
	\begin{algorithmic}[1]
		\renewcommand{\algorithmicrequire}{ \textbf{Input:}}
		\REQUIRE $\mathbf{h}_{i,c}, \mathbf{s}_i, \mathbf{s}_c, \mathbf{s}_0, \forall i  \in \mathcal{I},  \forall c \in \mathcal{C}$
		\renewcommand{\algorithmicrequire}{ \textbf{Output:}}
		\REQUIRE $\delta_{i,c}, \forall i  \in \mathcal{I},\forall c  \in \mathcal{C}$
		\STATE Initialize $\delta_{i,c}=0,\mathcal{C}_{i}=\emptyset, \mathcal{I}_{c}=\emptyset, K_i=0 ,\forall i \in \mathcal{I},\forall c \in \mathcal{C}$.
		\FOR{UT $c\in \mathcal{C}$}
		\WHILE{$\vert\mathcal{I}_c\vert< I_{\text{TDOA}}$}
		\FOR{satellite $i \in \mathcal{I}$}
		\IF{$K_i<K$}
		\STATE
		Compute similarity $\rho_{i,c}$ according to \eqref{17}.
		\STATE Compute $\widetilde{R},
		\widetilde{\lambda}_{r}, \widetilde{\mathbf{b}}_{r}$ according to \eqref{21}.
        \STATE Compute $\mu_{i,c}$ with the fast calculation method  according to \eqref{22}.
		\ENDIF
		\ENDFOR
		\STATE Select the set $\mathcal{I}^*$ of $m$ candidate satellites with smallest $\rho_{i,c}$. %\$\mathcal{I}^*=\text{arg mink}(\{\rho_{j,c}\}_{j\notin \mathcal{I}_c},m)$.
		\STATE Select the satellite $i^*$ to form the best UT-satellite topology geometry with $i^*=\text{arg}\ \underset {i\in \mathcal{I}^*}{\text{max}}\ \mu_{i,c}$.
  		\STATE Compute $\widetilde{R},
		\widetilde{\lambda}_{r}, \widetilde{\mathbf{b}}_{r}$ according to \eqref{21}.
        \STATE Update $R,\lambda_r,\mathbf{b}_r$ with $\widetilde{R},\widetilde{\lambda}_r,\widetilde{\mathbf{b}}_r$.
		\STATE Allocate a beam of satellite $i^*$ to UT $c$. Set $\delta_{i^*,c}=1, \mathcal{C}_{i^*}=\mathcal{C}_{i^*}\cup \{c\}, \mathcal{I}_c=\mathcal{I}_c\cup \{i^*\}, K_{i^*}=K_{i^*}+1$.
		\ENDWHILE
		\ENDFOR
	\end{algorithmic}
	\label{algorithm 2}
\end{algorithm}

\section{SIMULATION RESULTS}
\label{sec_SIMULATION_RESULTS}

This section presents simulation results to verify the performance of the proposed DSTA-based beamforming and beam scheduling algorithms. We choose three beamforming algorithms for comparison as below.
\begin{itemize}
\item ZF based beamforming \cite{21}. ZF beamforming eliminates the co-channel interference (CCI) term in \eqref{2} caused by frequency reuse among beams. Denoting the matrix of channel by $\mathbf{H}_i=[\mathbf{h}_{i,c_1},\cdots,\mathbf{h}_{i,c_{\vert\mathcal{C}_i\vert}}]^H\in \mathbb{C}^{\vert\mathcal{C}_i\vert\times N}$, the beamforming weight matrix $\mathbf{W}_i$ of satellite $i$ is calculated by
\begin{equation}
\label{ZF}
\mathbf{W}_i=\beta \mathbf{H}_i^H(\mathbf{H}_i\mathbf{H}_i^H)^{-1}\\
\end{equation}
with
\begin{equation*}
\beta=\sqrt{\frac{P\vert\mathcal{C}_i\vert}{\Vert(\mathbf{H}_i^H\mathbf{H}_i)^{-1}\Vert_F^2}}.
\end{equation*}
\item Single-cell beamforming (SCB) \cite{17}.
%The single-cell beamforming refers to a multi-cell scenario where conventional single-cell beamforming coefficients are used. The $\mathbf{w}_{i,c}$ is calculated by
In single-cell beamforming, $\mathbf{w}_{i,c}$ is calculated by
\begin{equation}
\label{SCB}
\mathbf{w}_{i,c}=\sqrt{\frac{P}{\Vert \mathbf{h}_{i,c}\Vert_F^2}}\mathbf{h}_{i,c}.
\end{equation}
\item Single-cell beamforming without interference (SCBWI). For comparison, we consider an ideal scenario where inter-beam interference is completely ignored and $\mathbf{w}_{i,c}$ is calculated the same as SCB. Since each UT SINR reaches the theoretical maximum, individual positioning accuracy is also the best, according to Proposition 1. Therefore, we use it as the ideal upper bound of the positioning accuracy. At this time, $\text{SINR}_{i,c}$ is reduced to $\text{SNR}_{i,c}$ that is calculated by
\begin{equation}
\label{NI}
	\text{SNR}_{i,c}=\frac{\lvert\mathbf{h}_{i,c}^T\mathbf{w}_{i,c}\rvert^2}{\epsilon_c^2}.
\end{equation}
\end{itemize}

In the baseline schemes for beam scheduling, we have compared the proposed HBS algorithm with the GDOP-based scheduling scheme \cite{22}, the communication-oriented beam scheduling scheme \cite{19}, and the parallax-based scheduling scheme \cite{res4}.
\begin{itemize}
\item GDOP-based scheduling. GDOP-based beam scheduling has been widely used in Global Positioning System. In GDOP-based scheduling, satellites are continuously and greedily selected from $\mathcal{I}$ to minimize GDOP for UT $c$ until there are $I_{\text{TDOA}}$ satellites providing service for UT $c$. GDOP is calculated as
\begin{equation}
\label{23}
	\text{GDOP}_c=\frac{1}{v}\sqrt{\text{tr}(({\mathbf{A}_c^T\mathbf{A}_c})^{-1})}.
\end{equation}	
Note that $\text{GDOP}_c$ can also be calculated by converting it into the form with regard to the eigenvalues of $\mathbf{A}_c$. With the proposed method \eqref{21}, $\text{GDOP}_c$ can be fast computed. Meanwhile, the GDOP-based beam scheduling is a particular case of our proposed beam scheduling algorithm if we set
$m=\vert\mathcal{I}\vert$.

\item Communication-oriented scheduling. In communication-oriented scheduling, satellites are continuously and greedily selected from $\mathcal{I}$ to minimize $\rho_{i,c}$ until there are $I_{\text{TDOA}}$ satellites providing service for UT $c$. Meanwhile, communication-oriented scheduling is a special case of HBS if we set $m=1$.

\item Parallax-based scheduling. We first denote
\begin{equation}
\mathbf{P}_c=	\left[
		\begin{array}{cccc}
			cos\phi_{1,1}^c & cos\phi_{1,2}^c & \cdots &cos\phi_{1,I}^c \\
			cos\phi_{2,1}^c & cos\phi_{2,2}^c  & \cdots &cos\phi_{2,I}^c \\
			\vdots & \vdots & \ddots&\vdots \\
			cos\phi_{I,1}^c & cos\phi_{I,2}^c & \cdots& cos\phi_{I,I}^c  \\
		\end{array}
		\right],
\end{equation}
and cost function $\mathbf{J}_i=\sum_{j=1}^{I}cos\phi_{i,j}^c$, where $\phi_{i,j}^c$ is the parallax of UT $c$ observed from satellite $i$ and satellite $j$. In parallax-based scheduling, we continuously delete satellite $i$ corresponding to max $\mathbf{J}_i$ and remove the row and column elements of the matrix $\mathbf{P}_c$ corresponding to the removed satellite until there are $I_{\text{TDOA}}$ satellites remaining.
\end{itemize}

Key simulation parameters are summarized in Table \ref{tab1}, where we set the number of satellites $I$ visible to the ground user to $16, 21, 26$ with the corresponding number of satellites/beams serving each UT being $3, 4, 5$.

\begin{table}[htbp]
	\centering
	\caption{Key simulation parameters }
	\begin{tabular}{cc}\hline
		Parameters & Values \\\hline
		The number of satellites visible to UTs & $16, 21 ,26$\\
  		$I_{\text{TDOA}}$ &$3, 4 ,5$\\
        $\sigma_{0,c}^2$     & $10^{-19}$                    \\
		The number of cells & $61$ \\
        Cell radius & $43.3$ km \\
		The maximum number of beams of each satellite & $12$\\
		The number of antennas per satellite& $8\times8$\\
		Orbit height& $600$ km\\
		Maximum beam transmission power& $20, 23, 26$ dBw\\
		Carrier frequency	&$4$ GHz\\
		Bandwidth for each satellite &$50$ Mhz\\
        UT's receiving noise power density &$-174$ dBm/Hz\\
		Receiving antenna gain of UTs &  $0$ dBi\\
       % \textcolor{red}{$u_\kappa$} & $-3.1$ dB          \\
        %\textcolor{red}{$\sigma_\kappa$} & $1.6$         \\
        \hline
	\end{tabular}
	\label{tab1}
\end{table}

\subsection{The Effectiveness of DSTA-based Beamforming}
\label{Simulation-results}

When $I_{\text{TDOA}}=4$, the cumulative distribution functions (CDFs) of UT positioning accuracy under different beamforming schemes are shown in Fig. \ref{fig2}, where the proposed HBS is adopted with $m=4$ for all the schemes.
It can be found that DSTA-based beamforming achieves higher positioning accuracy than the other schemes employing positioning-oriented beamforming.
As shown in Table \ref{tab2}, the average positioning accuracy of DSTA-based beamforming is around $9.54$m in terms of CRLB. It significantly outperforms ZF and SCB by $11.9\%$ and $51.1\%$, respectively. In addition, the performance of our proposed scheme is just within a small gap from the version of SCBWI.

\begin{figure}[htbp]
	\centering
	\includegraphics[width=0.5\textwidth]{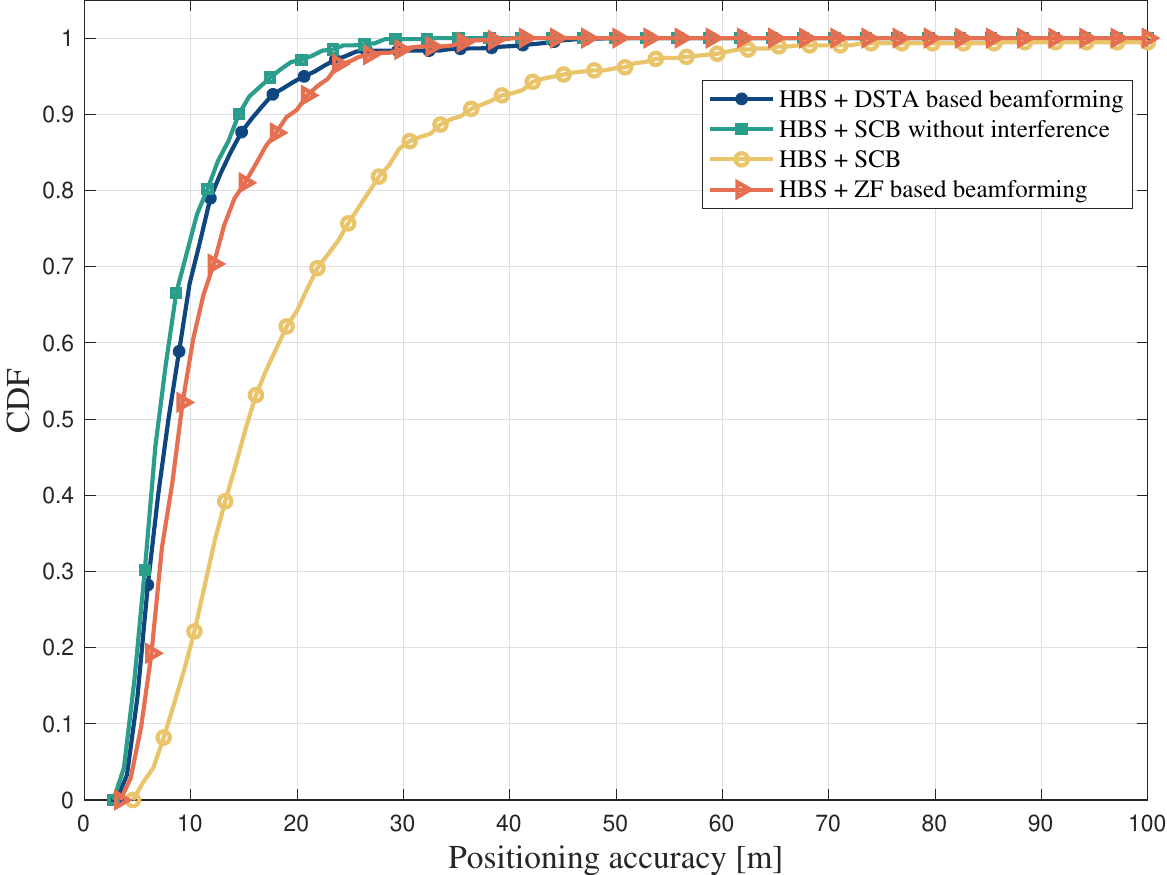}
	\caption{The CDFs of UT positioning accuracy under different beamforming schemes when $I=21$, $I_{\text{TDOA}}=4$, and $P=26$ dBw.}
	\label{fig2}
\end{figure}

\begin{table}[htbp]
	\centering
	\caption{Average UT positioning accuracy with DSTA, SCBWI, ZF, and SCB beamforming schemes under different system settings (m).}
	\begin{tabular}{ccccc}\hline
		Schemes & DSTA & SCBWI & ZF &SCB\\\hline
        P=20 dBw, $I_{\text{TDOA}}=4$ &16.5090& 16.5626& 19.9165&24.0789\\
		P=23 dBw, $I_{\text{TDOA}}=4$ &13.2239& 11.8581&14.2049&20.7337\\
        P=26 dBw, $I_{\text{TDOA}}=3$&11.4812&\phantom{0}9.8182&13.5972&29.9317\\
		P=26 dBw, $I_{\text{TDOA}}=4$&\phantom{0}9.5415&\phantom{0}8.4220&10.8467&19.5422\\
		P=26 dBw, $I_{\text{TDOA}}=5$& \phantom{0}8.0563&\phantom{0}7.2250&\phantom{0}8.9308&15.7821\\\hline
	\end{tabular}
	\label{tab2}
\end{table}

The CDFs of UT SINR under four beamforming schemes are drawn in Fig. \ref{fig3} to give more insight into the superior performance of our proposal.
Notably, it indicates a significant improvement brought by DSTA in SINR distribution compared to ZF and SCB schemes,
and there is only a 1-3dB performance loss compared to SCBWI.
Meanwhile, the stepped curve with DSTA-based beamforming is because we raise the SINR threshold by fixed value $\gamma_0$ each time.

\begin{figure}[htbp]
	\centering
	\includegraphics[width=0.5\textwidth]{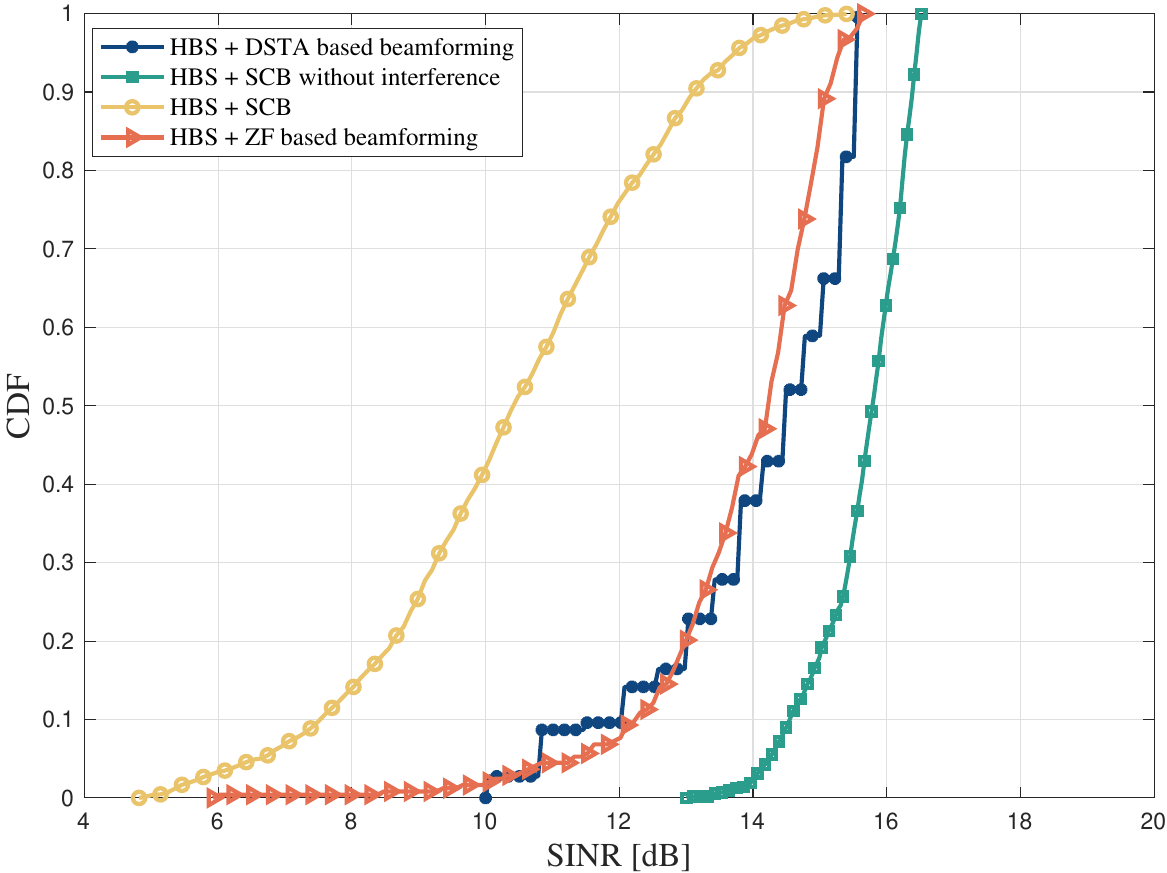}
	\caption{The CDFs of UT SINR under different beamforming schemes when $I=21$, $I_{\text{TDOA}}=4$, and $P=26$ dBw.}
	\label{fig3}
\end{figure}

\subsection{The Impact of Beam Transmission Power Budget on The Performance of DSTA-based Beamforming}

In Fig. \ref{fig5}, the performance of the DSTA-based beamforming scheme is evaluated with $I_{\text{TDOA}}=4$ under various satellite transmission power budgets. It can be found that the positioning accuracy performance degrades when the available transmission power decreases, which is possible because UT SINR becomes lower at this time. Moreover, according to Table \ref{tab2}, when the transmission power budget decreases from 26 dBw to 20 dBw, DSTA outperforms ZF more significantly, which corresponds to the positioning accuracy improvement by $11.9\%$ to $17.1\%$. The reason for such a performance difference is that interference power is not the key that influences UT SINR and  UT positioning accuracy if beam transmission power gets lower.

\begin{figure}[htbp]
	\centering
	\includegraphics[width=0.5\textwidth]{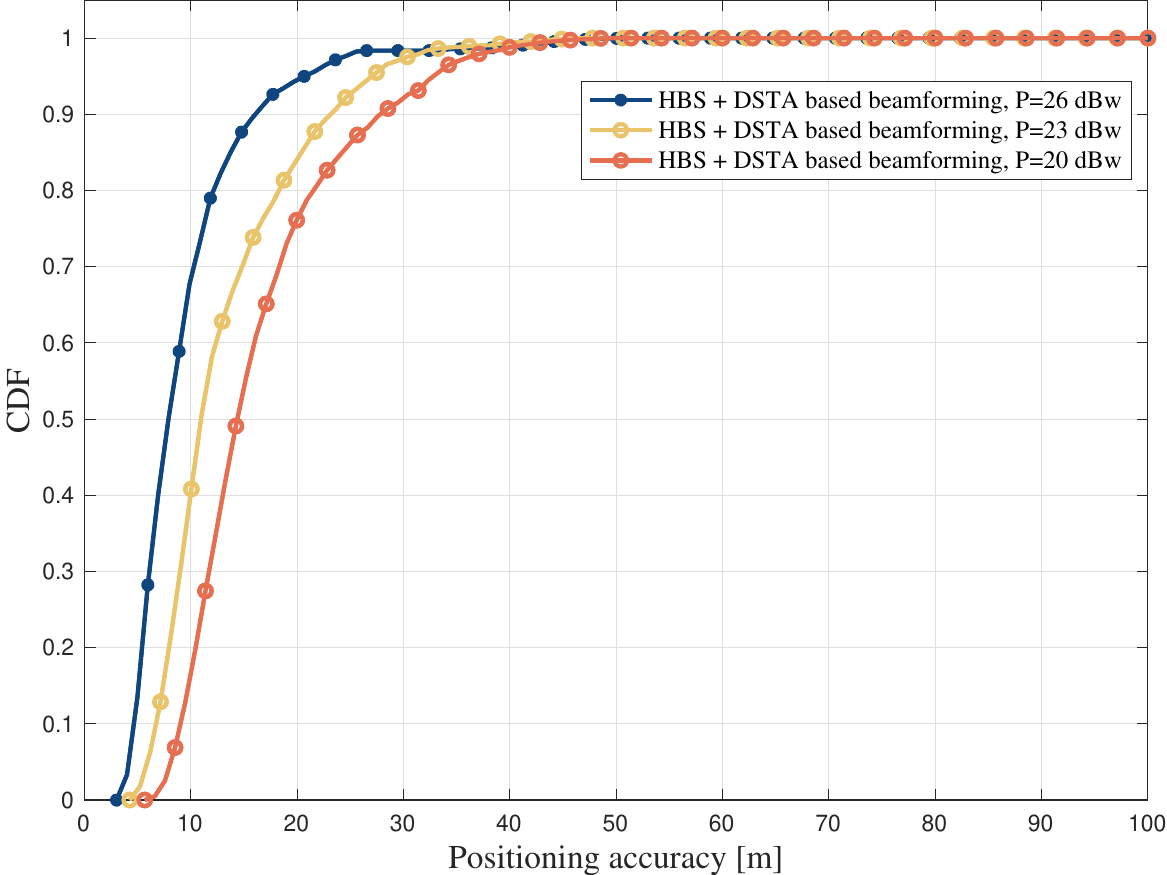}
	\caption{The CDFs of positioning accuracy with DSTA-based beamforming under different satellite transmission power budgets when $I=21$ and $I_{\text{TDOA}}=4$.}
	\label{fig5}
\end{figure}

\subsection{The Impact of The Number of Visible Satellites on The Performance of DSTA-based Beamforming}

 The CDFs of UT positioning accuracy when $I_{\text{TDOA}}= 3, 4, 5$ ($I=16,21,26$) are shown in Fig. \ref{fig6}, and it can be found that a more significant number of visible satellites can contribute to a considerable improvement in positioning performance. As shown in Table \ref{tab2}, when the number of satellites serving the UT increases from 3 to 5, the positioning accuracy with DSTA improves by $29.8\%$. This phenomenon is expected because more good-quality signals can be measured, and better network topology geometry can be obtained. To verify the performance gain of our proposal, we list the accuracy performance under different beamforming schemes and $I_{\text{TDOA}}$ in Table \ref{tab2}.

\begin{figure}[htbp]
	\centering
	\includegraphics[width=0.5\textwidth]{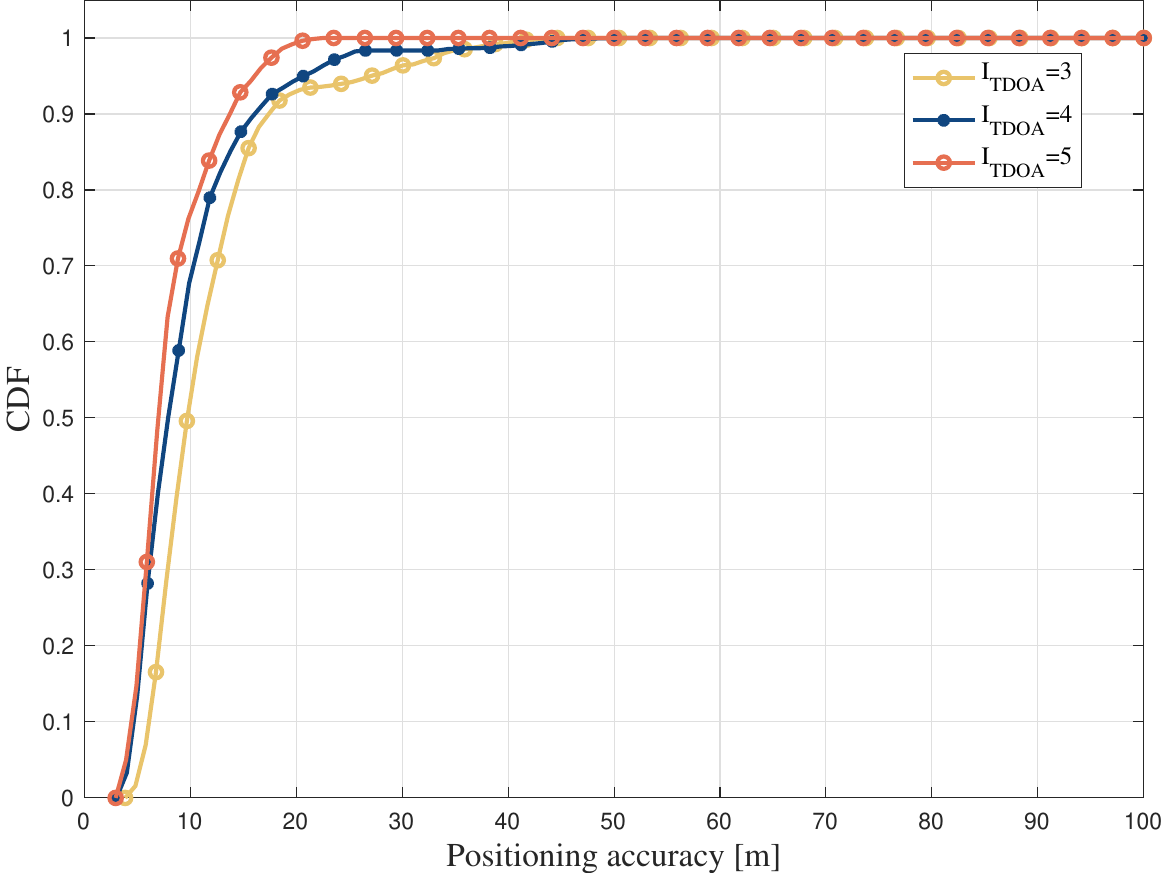}
	\caption{The CDFs of positioning accuracy with DSTA-based beamforming under different numbers of visible satellites when $P=26$ dBw.}
	\label{fig6}
\end{figure}

\subsection{Performance Evaluation of The Proposed Heuristic Beam Scheduling}

To examine the performance of the proposed beam scheduling scheme, Fig. \ref{fig7} shows the CDFs of UT positioning accuracy under GDOP-based beam scheduling and our proposal with $m=1,4,12$ and it can be found that the proposed scheme outperforms GDOP-based beam scheduling, which is more significant when the parameter $m$ is appropriately selected. Moreover, as shown in Table \ref{tab3}, the proposed HBS scheme always results in substantial gains in positioning performance with different beamforming schemes. Specifically, it outperforms GDOP-based beam scheduling by $28.3\%$, $55.9\%$, and $53.2\%$ in average positioning accuracy with DSTA-based beamforming for $m=1,4,12$, respectively.
The above results are because the GDOP-based scheme only considers the UT-satellite topology geometry. In contrast, the HBS scheme considers the impact of geometric topology property and channel correlation among UTs on positioning performance.

\begin{table}[htbp]
	\centering
	\caption{Average UT positioning accuracy under different beam scheduling schemes when $I=21$, $I_\text{TDOA}=4$, and $P=26$ dBw. (m)}
	\begin{tabular}{ccccc}\hline
		Schemes & DSTA & SCBWI & ZF &SCB\\\hline
        \phantom{00}HBS, $m=1$&15.1943&13.9944&15.2543&26.4950\\
		\phantom{00}HBS, $m=4$&\phantom{0}9.5415&\phantom{0}8.4220&10.8467&19.5422\\
        \phantom{000}HBS, $m=12$&10.1288&\phantom{0}6.3246&12.8647&26.0309\\
		\phantom{000}GDOP-based& 21.6105&\phantom{0}6.0365&91.5679&47.9319\\
  	    \phantom{000}Parallax-based& 34.0121&15.2163&181.157&103.7096\\\hline
	\end{tabular}
	\label{tab3}
\end{table}

At last, to further reveal the intrinsic influence of parameter $m$ in the proposed HBS scheme, Fig. \ref{fig8} and Fig. \ref{fig9} show the CDFs of UT SINR and GDOP under different $m$. In Fig. \ref{fig8}, it is demonstrated that smaller $m$ contributes to higher UT SINR, and HBS-DSTA achieves the best UT SINR performance with $m=1$.
However, this does not necessarily result in the positioning performance. As $m$ becomes smaller, as shown in Fig. \ref{fig9}, the GDOP value increases rapidly due to the formation of improper UT-satellite topology. Although the communication-oriented scheme exhibits better SINR and GDOP-based and parallax-based schemes exhibit better GDOP, HBS has superior positioning performance. These results reflect there exists a trade-off between UT SINR and GDOP value that can be adjusted by $m$.

\begin{figure}[htbp]
	\centering
	\includegraphics[width=0.5\textwidth]{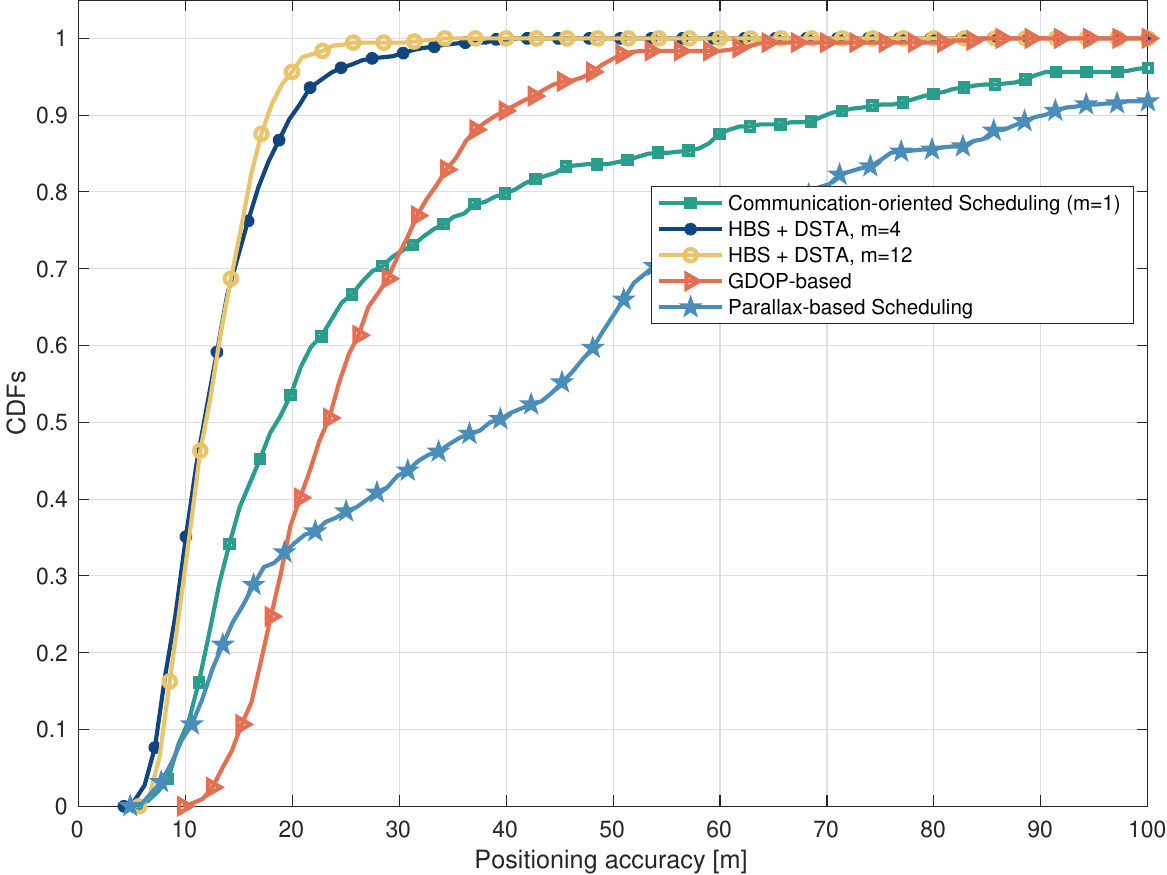}
	\caption{The CDFs of UT positioning accuracy under different beam scheduling schemes when $I=21$, $I_{\text{TDOA}}=4$, and $P=26$ dBw.}
	\label{fig7}
\end{figure}

\begin{figure}[htbp]
	\centering
	\includegraphics[width=0.5\textwidth]{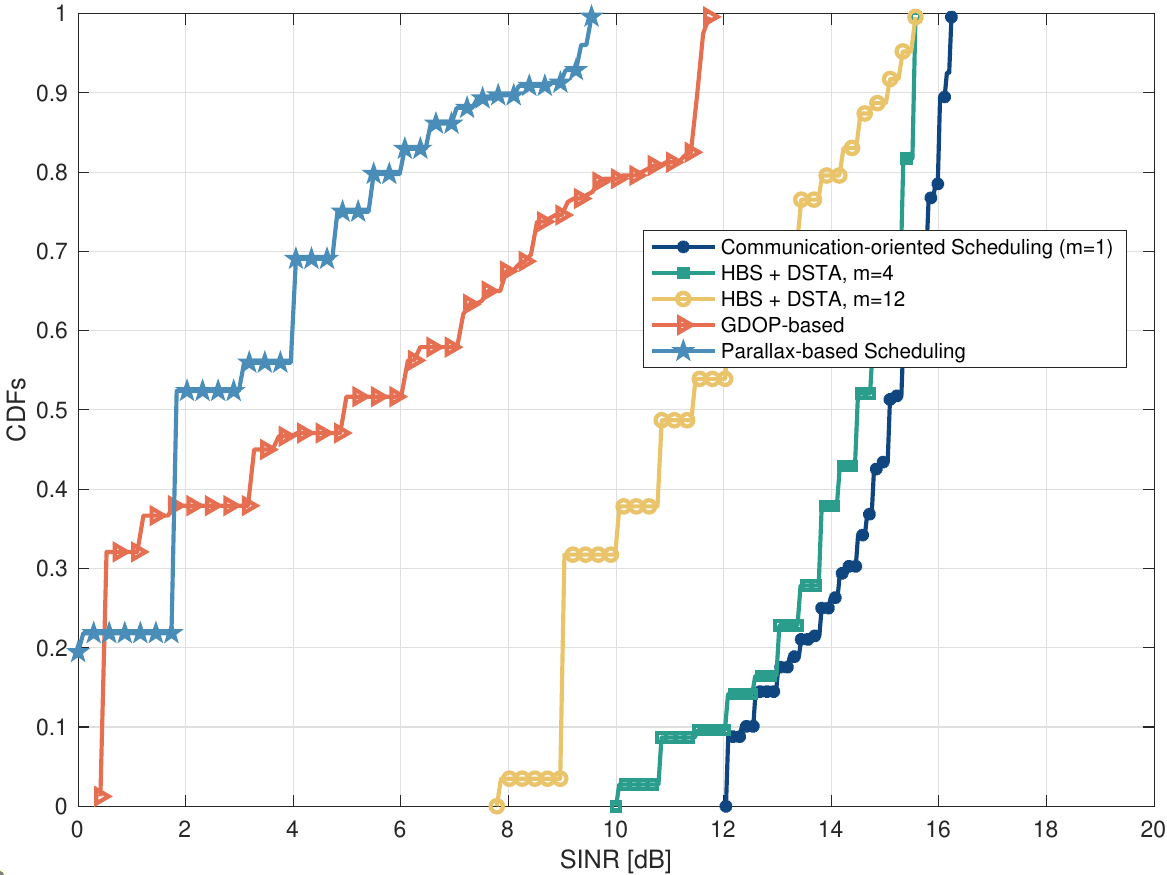}
	\caption{The CDFs of SINR under different beam scheduling schemes when $I=21$, $I_{\text{TDOA}}=4$, and $P=26$ dBw.}
	\label{fig8}
\end{figure}

\begin{figure}[htbp]
	\centering
	\includegraphics[width=0.5\textwidth]{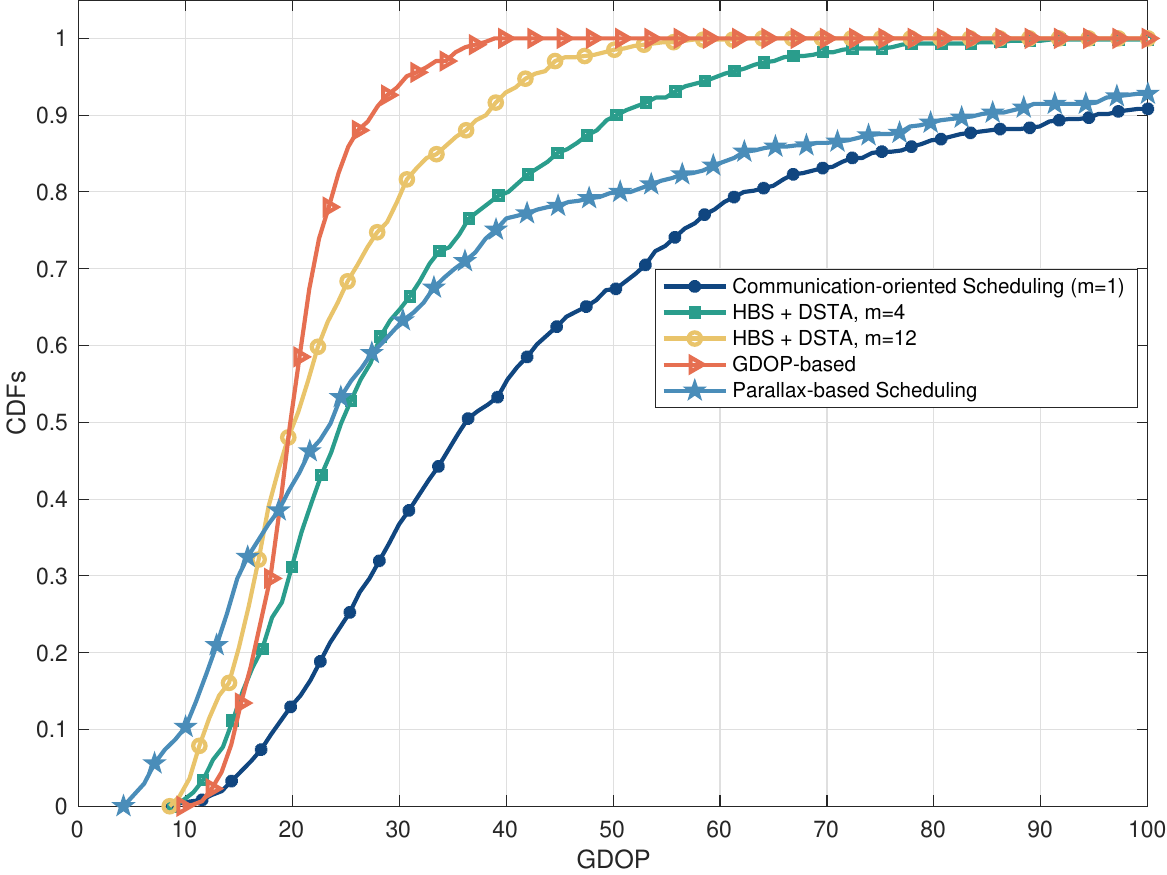}
	\caption{The CDFs of UT GDOP under different beam scheduling schemes when $I=21$, $I_{\text{TDOA}}=4$, and $P=26$ dBw.}
	\label{fig9}
\end{figure}

\section{CONCLUSION}
\label{section5}

 In this paper, positioning-oriented beam scheduling and beamforming have been studied in multi-beam LEO satellite networks. As user positioning performance suffers from severe inter-beam interference and improper UT-satellite topology geometry, based on the monotonic relationship between user positioning accuracy and perceived SINR, we have designed a dynamic SINR adjustment based beamforming algorithm with SDR technique and also a fast heuristic beam scheduling algorithm that takes channel correlation among users and satellite-user topology geometric property into account. By simulation, it has been shown that average user positioning accuracy can be improved by
17.1\% and 55.9\%, compared to conventional beamforming and beam scheduling schemes, respectively.

%\bibliographystyle{IEEEtran}
%\bibliography{myref}

\begin{IEEEbiography}[{\includegraphics[width=1in,height=1.25in,clip,keepaspectratio]{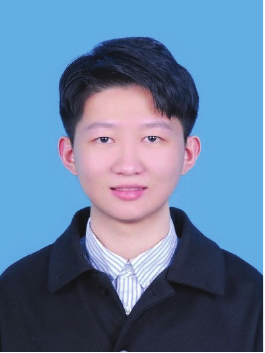}}]{Hongtao Xv} 
received the B.S. degree in information and communication engineering from Beijing University of Posts and Telecommunications (BUPT), Beijing, China, in 2022. He is currently working toward a Ph.D. degree in the State Key Laboratory of Networking and Switching Technology, BUPT, Beijing, China. His research interests include LEO satellite communication and beamforming. 
\end{IEEEbiography}

\vspace{-20 mm} 
\begin{IEEEbiography}
[{\includegraphics[width=1in,height=1.25in,clip,keepaspectratio]{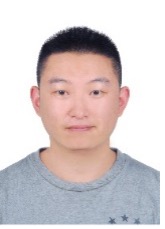}}]{Yaohua Sun} received the bachelor’s degree (Hons.) in telecommunications engineering (with management) and the Ph.D. degree in communication engineering from Beijing University of Posts and Telecommunications (BUPT), Beijing, China, in 2014 and 2019, respectively. He is currently an Associate Professor with the School of Information and Communication Engineering, BUPT. His research interests include intelligent radio access networks and LEO satellite communication. He has published over 30 papers, including 3 ESI highly cited papers. He has been a Reviewer for \textit{IEEE Trans. Commun.}, \textit{IEEE Trans. Mob. Comput.}, etc. 
\end{IEEEbiography}

\vspace{-20 mm} 
\begin{IEEEbiography}[{\includegraphics[width=1in,height=1.25in,clip,keepaspectratio]{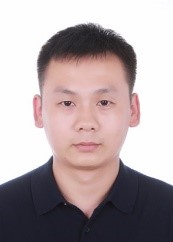}}]{Yafei Zhao} 
received his B.S., M.S., and Ph.D. degrees in aeronautical and astronautical science and technology from the Harbin Institute of Technology (HIT), China, in 2010, 2012, and 2018, respectively. He is currently an associate professor at Beijing University of Posts and Telecommunications. His research interests include integrated communication and navigation, integrated satellite-terrestrial networks, etc.
\end{IEEEbiography}

\vspace{-20 mm} 
\begin{IEEEbiography}[{\includegraphics[width=1in,height=1.25in,clip,keepaspectratio]{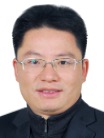}}]{Mugen Peng} (Fellow, IEEE) received the Ph.D. degree in communication and information systems from the Beijing University of Posts and Telecommunications, Beijing, China, in 2005. In 2014, he was an Academic Visiting Fellow at Princeton University, Princeton, NJ, USA. He joined BUPT, where he has been the Dean of the School of Information and Communication Engineering since June 2020, and the Deputy Director of the State Key Laboratory of Networking and Switching Technology since October 2018. He leads a Research Group focusing on wireless transmission and networking technologies with the State Key Laboratory of Networking and Switching Technology, BUPT. His main research interests include wireless communication theory, radio signal processing, cooperative communication, self-organization networking, integrated sensing and communication, and Internet of Things. He was a recipient of the 2018 Heinrich Hertz Prize Paper Award, the 2014 IEEE ComSoc AP Outstanding Young Researcher Award, and the Best Paper Award in IEEE ICC 2022, JCN 2016, and IEEE WCNC 2015. He is on the Editorial or Associate Editorial Board of \textit{IEEE Commun. Mag.}, \textit{IEEE Netw.}, \textit{IEEE Internet Things J.}, \textit{IEEE Trans. Veh. Technol.}, and \textit{IEEE Trans. Netw. Sci. Eng.}, etc.
\end{IEEEbiography}

\vspace{-20 mm} 
\begin{IEEEbiography}[{\includegraphics[width=1in,height=1.25in,clip,keepaspectratio]{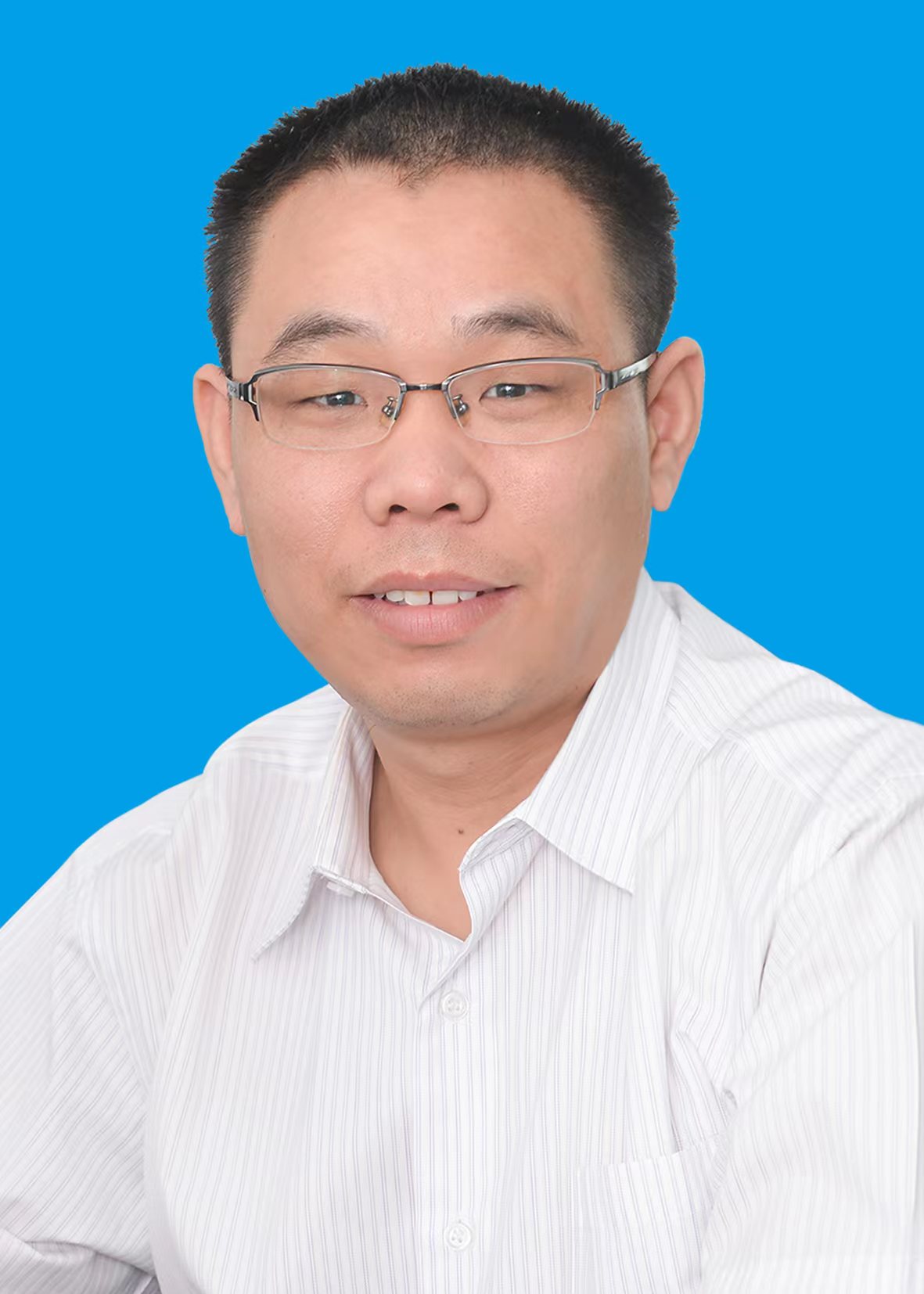}}]{Shijie Zhang} received the B.S., M.S., and Ph.D. degrees in spacecraft design from the Harbin Institute of Technology, Harbin, China, in 2000, 2002, and 2005, respectively. He is currently the chief scientist of the Yinhe Hangtian (Beijing) Communication Technology Co., Ltd., and the professor of the State Key Laboratory of Media Convergence Production Technology and Systems. His research interests include satellite internet, aerospace communication, spacecraft formation flying, etc.
\end{IEEEbiography}

\end{document}